\documentclass[1p,review]{elsarticle}

\usepackage{algorithm}
\usepackage{algorithmic}
\usepackage{amsmath}
\usepackage{amsthm}
\usepackage{amssymb}
\usepackage{amsfonts}
\usepackage{float}
\usepackage{epsfig}
\usepackage{verbatim}
\usepackage{graphicx,epstopdf}
\usepackage{subfig}
\usepackage{nicefrac}
\usepackage{paralist}
\usepackage{color}
\usepackage{url}
\usepackage[greek,english]{babel}
\usepackage{eurosym} 

\theoremstyle{plain}

\newtheorem{theorem}{Theorem}

\newtheorem{definition}{Definition}
\newtheorem{corollary}{Corollary}
\theoremstyle{definition}

\floatname{algorithm}{Algorithm}

\allowdisplaybreaks[4]

% To save space around figures, tables, etc.
%\setlength{\textfloatsep}{10pt}
%\setlength{\intextsep}{10pt}

\newcommand{\bkref}[1] {(\ref{#1})}

\journal{Elsevier Computer Communications}

\begin{document}

\begin{frontmatter}
\vspace*{-70px}
\title{A Distributed Demand-Side Management Framework for the Smart Grid}

\author[polimi]{Antimo Barbato}
\ead{antimo.barbato@polimi.it}
\address[polimi]{DEIB, Politecnico di Milano, via Ponzio 34/5, 20133 Milano, Italy.}

\author[polimi]{Antonio Capone\corref{cor1}}
\ead{antonio.capone@polimi.it}
%\address[polimi]{DEIB, Politecnico di Milano, via Ponzio 34/5, 20133 Milano, Italy.}
\cortext[cor1]{Corresponding author, Tel: (+39) 02.2399.3449, Fax: (+39) 02.2399.3413}

\author[lri]{Lin Chen}
\ead{lin.chen@lri.fr}
\address[lri]{LRI, Université Paris-Sud, Bat. 650, rue Noetzlin, 91405 Orsay, France.}

%\fntext[fn1]{This is the first author footnote.}

\author[lri,iuf]{Fabio Martignon}
\ead{fabio.martignon@lri.fr}
%\address[lri]{LRI, Université Paris-Sud, Bat. 650, rue Noetzlin, 91405 Orsay, France.}
\address[iuf]{IUF, Institut Universitaire de France}

\author[lipade]{Stefano Paris}
\ead{stefano.paris@parisdescartes.fr}
\address[lipade]{Paris Descartes University, 45 rue des Saints Peres, 75270 Paris, France.}

% ==============================================================================
% Abstract
% ==============================================================================

\begin{abstract}

%Smart Grids have recently gained increasing attention as a means to efficiently manage XXX the houses energy consumption (meglio "energy consumption" e basta)XXX in order to reduce XXX their XXX peak absorption, thus improving the performance of power generation and distribution systems.
This paper proposes a fully distributed Demand-Side Management system for Smart Grid infrastructures, especially tailored to reduce the peak demand of residential users.
%, XXX VOLENDO VIA thus improving the performance of power generation and distribution systems XXX.
In particular, we use a \textit{dynamic pricing strategy}, where energy tariffs are function of the overall power demand of customers.  
%XXX In such scenario, multiple selfish users (players?) select the cheapest time slots (minimizing their daily bill) while satisfying their energy requests. XXX 
%NON SO se conviene mettere che abbiamo due approcci, quello in cui un giocatore ha piu appliances e un player/una appliance, oppure solo il secondo dicendo che e' molto vicino al primo caso, visto che abbiamo risultati numerici su entrambi 
We consider two practical cases: (1) a fully distributed approach, where each appliance decides \textit{autonomously} its own scheduling, and (2) a hybrid approach, where each user must schedule all his appliances. We analyze numerically these two approaches, showing that they are characterized practically by the same performance level in all the considered grid scenarios.

We model the proposed system using a non-cooperative game theoretical approach, and demonstrate that our game is a generalized ordinal potential one under general conditions.
Furthermore, we propose a simple yet effective best response strategy that is proved to converge in a few steps to a pure Nash Equilibrium, thus demonstrating the robustness of the power scheduling plan obtained without any central coordination of the operator or the customers.
Numerical results, obtained using real load profiles and appliance models, show that the system-wide peak absorption achieved in a completely distributed fashion can be reduced up to 55\%, thus decreasing the capital expenditure (CAPEX) necessary to meet the growing energy demand.

%achieved in a completely distributed fashion well approaches the social welfare (optimal solution), thus reducing the CAPEX necessary to meet the growing energy demand.

\end{abstract}

\begin{keyword}
Demand Management System, Power Scheduling, Load-Shifting, Game Theory, Potential Games.
\end{keyword}

\end{frontmatter}

% ==============================================================================
% Introduction
% ==============================================================================

\section{Introduction}
\label{introduction}

%\textbf{Selling Key points of Single Appliance vs Multiple Appliances:}
%\begin{itemize}
%\item No central control for house appliances (only appliances needs to communicates among each other)
%\item Manufacturers can independently implement the algorithms (only the exchanged information must comply with a standard protocol)
%\item Householder can specify appliances constraints independently (house server requires to know all constraints)
%\end{itemize}
%$\\$

The electricity generation, distribution and consumption are in the throes of change due to significant regulatory, societal and environmental developments, as well as technological progress. Recent years have witnessed the redefinition of the power grid in order to tackle the new challenges that have emerged in electric systems. 
One of the most relevant challenges associated with the current power grid is represented by the \textit{peaks} in the power demand due to the high correlation among energy demands of customers. 
Since electricity grids have little capacity to store energy, power demand and supply must balance at all times; as a consequence, energy plants capacity has to be sized to match the total demand peaks, driving a major increase of the infrastructure cost, which remains underutilized during off-peak hours. 
This waste of resources has become an even more critical issue in the last few years due to the increase of the worldwide energy consumption \cite{european2010europe} and the increasing share of renewable energy sources \cite{wilkes2011wind}. 
High energy peaks are mostly due to residential users, who cover a relevant portion of the worldwide energy demands \cite{EEU}, but are inelastic with respect to the grid requirements as they usually run their home appliances only depending on their own requirements. For this reason, residential users can play a key role in addressing the peak demand problem. Time-Of-Use (TOU) tariffs represent a clear attempt
% to assign responsibility to users to address this challenge. TOU tariffs have been recently introduced in many countries (generally, with a quite small impact on users habits and network) 
to incite users to shift their energy loads out of the peak hours \cite{tsg_2013_tou}.
%Residential users involvement in the power systems management is also obtained through the limitation of instantaneous power consumption of customers, introduced to reduce peaks and attenuate the impact of loads on the network~\cite{Res29206}. 
%Despite the grid management cannot be analysed without considering the final user perspective, the peak demand challenge, such as other issues related to the electric grid, cannot be left to users, since this new difficult task would significantly affect their everyday life. For this reason, 

The most promising solution to tackle the peak demand challenge is represented by the Smart Grid, in which an intelligent infrastructure based on Information and Communication Technology (ICT) tools is deployed alongside with the distribution network, which can deal with all the decision variables while minimizing the effort required to end-users.
All data provided by the grid, such as the consumption of buildings \cite{jiang2009design} \cite{bressan2010deployment}, electricity costs and distributed Renewable Energy Sources (RESs) data, can be used to optimize its efficiency through \textit{Demand-Side Management} (DSM) methods, which represent a proactive approach to manage the household electric devices by integrating customers' needs and requirements with the retailers' goals \cite{gellings1987demand}. The main objective  of these methods is to modify consumers' energy demand in a proper way by deciding \textit{when} and \textit{how} to execute home appliances so as to improve the overall system efficiency while guaranteeing low costs and high comfort to users.
%Demand side management can be designed for different goals such as to reduce the peak demand by means of shifting loads from peak hours to off-peak hours and to achieve deep penetration of renewable resources by means of adjusting the energy demand to the supply.

In this paper we propose a novel, \textit{fully distributed DSM system} aimed at reducing the peak demand of a group of residential users (e.g., a smart city neighborhood).
In particular, we consider a \textit{dynamic pricing strategy}, where energy tariffs are function of the overall power demand of customers. 

%We model our system using a game theoretical approach, considering both practical cases where (1) a fully distributed approach is adopted, and each appliance decides \textit{autonomously} its scheduling, and (2) each user controls/schedules all his appliances.
We model our system using a game theoretical approach, considering two practical cases where (1) each appliance decides \textit{autonomously} its scheduling in a fully distributed fashion (Single-Appliance DSM), and (2) each user must schedule all his home appliances (Multiple-Appliance DSM).
The proposed approach automatically ensures the reduction of the electricity demand at peak hours due to dynamic pricing.

We compare numerically these two cases, showing that the first is characterized only by a negligible performance degradation in all the considered grid scenarios.
% 1. MA-DSM more complex architecture, higher costs
Nevertheless, while both mechanisms achieve almost the same performance level, the Multiple-Appliance DSM system requires a more complex architecture with a central server for each house that collects all appliances information and plays on behalf of the householder.
Such an approach would increase the installation and operating costs due to the higher system complexity.
% 2. SA-DSM simpler design and configuration
On the contrary, in the Single-Appliance DSM system, one can use the processing and communication capabilities of devices that can autonomously optimize their usage, thus greatly simplifying the architecture design and system configuration. This solution is made possible by the diffusion of \textit{Smart Appliances} that are no longer merely 
passive devices, but active participants in the power grid infrastructure \cite{lui2010get}. 
%Indeed, householders can specify appliances' constraints individually and at different instants. 
%\textbf{XXX VECCHIA FRASE DA TOGLIERE We model our system using a game theoretical approach, where the players are the end customers, the set of strategies is their daily energy demand, and the objective function the players aim at minimizing is their daily electricity bill. XXX
%(In the first scenario) Each customer can modify his daily energy demand by means of scheduling home device activities and deciding when to buy energy from the electricity retailer with the final goal of reducing his daily energy bill. XXX}
%In the second scenario, appliances decide autonomously when to execute. 

%%%%%%%%%%%%%%%%%%%%%%%%%%%%%%%%%%%%%%%%%%%%%%%%%%%%%%%
%  Main difference of our work with respect 
%  to previous literature
%%%%%%%%%%%%%%%%%%%%%%%%%%%%%%%%%%%%%%%%%%%%%%%%%%%%%%%
We underline that, while recent literature has focused on the design of DSM systems for \textit{controllable} devices~\cite{mohsenian2010autonomous}, namely devices whose power load profile within their operating time can be modulated according to the DSM goals, our work designs a distributed DSM to select the best (cheapest) schedule for \textit{shiftable} appliances.
Indeed, differently from air conditioning or heating systems, appliances like the washing machine and the electric oven have a fixed power profile optimized for specific goals.
In such cases, a user can choose the starting time for each shiftable appliance, whose power profile is \textit{fixed}.
Therefore, our scheme is complementary to the approaches devised for controllable devices.

We demonstrate that our game is a \textit{generalized ordinal potential game} \cite{MondererShapley} under some simple and very general conditions (viz., the regularity of the pricing function). Such feature guarantees some nice properties, such as the existence of at least one pure Nash equilibrium (where no player has an incentive to deviate unilaterally from the scheduling pattern he decided upon).
Furthermore, we show that any sequence of asynchronous improvement steps is finite and always converges to a pure Nash equilibrium.

%We demonstrate that our proposed pricing strategy leads to a \textit{potential game}, provided that the pricing is convex with respect to the total demand, $c$. Moreover, if the pricing is linear in~$c$, we show that the game is exactly potential, the potential function being the total cost paid by all householders, $P$. 
%Such feature guarantees the existence of at least one pure Nash equilibrium (where no customer has an incentive to deviate unilaterally from the scheduling pattern he decided upon), namely the strategy that minimizes $P$. Furthermore, in such games, best response dynamics always converges to a Nash equilibrium.
%This is a very important feature, since potential games have nice properties, such as existence of at least one pure Nash equilibrium (where no customer has an incentive to deviate unilaterally from the scheduling pattern he decided upon), namely the strategy that minimizes $P$. Furthermore, in such games, best response dynamics always converges to a Nash equilibrium.

In summary, our paper makes the following contributions:

\begin{itemize}
\item The proposition of a novel, fully distributed DSM method, able to reduce the peak demand of a group of residential users, which we model and study using a game theoretical framework. In our vision, the energy retailer fixes the energy price dynamically, based on the total power demand of customers; then, appliances autonomously decide their schedule, reaching an efficient Nash equilibrium point. 

\item Mathematical proofs that our proposed game is a \textit{generalized ordinal potential game}, under general conditions.
% (in particular, the utilization of regular pricing functions, as we will define in the following).%potential, and in particular \textit{exactly potential} when the pricing scheme imposed by the energy retailer is linear in the total demand of customers.

\item The demonstration of the Finite Improvement Property, according to which any sequence of asynchronous improvement steps (and, in particular, \textit{best response dynamics}) converges to a pure Nash equilibrium.

%\item XXX FUTURE ISSUE The determination of bounds on the Price of Anarchy, the Price of Stability of the proposed game. XXX

%\item XXX FUTURE ISSUE Learning XXX

\item A thorough numerical evaluation that shows the effectiveness of the proposed approach in several scenarios, with real electric appliances scheduled by householders.

\end{itemize}

The paper is organized as follows. Section~\ref{related_work} discusses related work. Section~\ref{system_model} describes the main characteristics of the distributed system we propose to manage the energy consumption of residential users. Section~\ref{distributed_ps} presents our proposed game theoretical formulation for the Single and Multiple-Appliance DSM, as well as the structural properties of our game.
%The distributed optimization model that we propose to reach stable (equilibrium) solutions, based on a best-response mechanism, is presented in Section~\ref{optimal_ps}. 
Numerical results are presented and analyzed in Section~\ref{Numerical_Results}.
Finally, Section~\ref{conclusion} concludes the paper.

% ==============================================================================
% Related Works
% ==============================================================================

\section{Related Work}
\label{related_work}

Demand-Side Management (DSM) mechanisms have recently gained attention by the scientific community due to their advantages in terms of wise use of energy and cost reduction~\cite{strbac2008demand}.
In DSM systems proposed in the literature, a mechanism is defined that, based on energy tariffs and data forecasts for future periods (e.g., photovoltaic power generation, devices future usage), is able to automatically and optimally schedule the home devices activities for future periods and to define the whole energy plan of users (i.e., when to buy and sell energy to the grid). The main goal of these solutions is to minimize the electricity costs while guaranteeing the users' comfort; this can be achieved through the execution of methods based on optimization models \cite{JacominoLe4OR}, \cite{Agnetis2011} or heuristics, such as Genetic Algorithms \cite{soares2013domestic} and customized Evolutionary Algorithms \cite{allerding2012electrical}, which are used to solve more complex formulations of the demand management problem. Since RESs diffusion is rapidly increasing, several works include renewable plants into DSM frameworks. In these cases, devices are scheduled also based on the availability of an intermittent electricity source (e.g., PV plants) and users' profits from selling renewable electricity to the energy market are taken into account \cite{Clasters2010}. The uncertainty of RESs generation forecasts is tackled through stochastic approaches, such as stochastic dynamic programming which is a very suitable tool to address the decision-making process of energy management systems in presence of uncertainty, such as the one related to the electricity produced from weather-dependent generation sources \cite{livengood2009energy}. The efficiency of demand management solutions can be notably improved by including storage systems that can increase the DSM flexibility in optimizing the usage of electric resources. Specifically, batteries can be used to harvest the renewable generation in excess for later use or to charge the ESS when the electricity price is low, with the goal of minimizing the users' electricity bill \cite{guo2012optimal}. \\
Solutions \cite{JacominoLe4OR}--\cite{guo2012optimal} are based on a \textit{single-user} approach in which the energy plans of residential customers are individually and locally optimized.
However, in order to achieve relevant results from a system-wide perspective, the energy management problem could be applied to \textit{groups} of users (e.g., a neighborhood or micro-grids), instead of single users.
For this reason, some preliminary solutions have been proposed in the literature to manage energy resources of groups of customers.
In \cite{barbato2011house}, for example, the energy bill minimization problem is applied to a group of cooperative residential users equipped with PV panels and storage devices (i.e., electric vehicle batteries).
A global scale optimization method is also proposed in \cite{molderink2009domestic}, in which an algorithm is defined to control domestic electricity and heat demand, as well as the generation and storage of heat and electricity of a group of houses.
These multi-user solutions require some sort of centralized coordination system run by the operator in order to collect all energy requests and find the optimal solution.
To this end, a large flow of data must be transmitted through the Smart Grid network, thus introducing scalability constraints and requiring the definition of high-performance communication protocols. 
Furthermore, the coordination system should also verify that all customers comply with the optimal task schedule, since the operator has no guarantee that any user can gain by deviating unilaterally from the optimal solution.
Therefore, the collection of users' metering data and the enforcing of the optimal appliance schedule can introduce novel threats to customers' security and privacy.
For these reasons, some \textit{distributed} DSM methods have been proposed in which decisions are taken locally, directly by the end consumer.
In this case, Game Theory represents the ideal framework to design DSM solutions.
Specifically, in~\cite{mohsenian2010autonomous} a distributed DSM system among users is proposed, where the users' energy consumption scheduling problem is formulated as a game: the players are the users, and their strategies are the daily schedules of their household appliances and loads.
The goal of the game is to either reduce the peak demand or the energy bill of users.
A game theoretical approach is also used in \cite{ibars2010}, in which a distributed load management is defined to control the power demand of users through dynamic pricing strategies.
However, in these works, a very simplified mathematical description is used to model houses, which does not correspond to real use cases. 

In this paper we propose a DSM method, based on a game theoretical approach, which overcomes the most important limitations of the works proposed in the literature and described above. Our DSM is a fully distributed system, in which no centralized coordination is required, and only a limited and aggregated amount of data needs to be transmitted between the operator and the householders through the Smart Grid. For these reasons, scalability, communication, privacy and security issues are greatly mitigated.
Moreover, a realistic model of household contexts is illustrated; specifically, a mathematical description  of home devices is provided. Devices are defined as non-preemptable activities characterized by specific load consumption profiles, determined based on real data, and are scheduled according to users' preferences defined based on real use-case scenarios. Finally, to the best of our knowledge, the single-appliance demand management game proposed in this paper, in which electric devices can autonomously and locally optimize their usage, has never been studied in the literature. %Finally, a parallel analysis between users' and overall electric system performance achieved through the proposed demand management mechanisms is provided.

%-------------------------------------------
% System Model
%-------------------------------------------
\section{System Model}
\label{system_model}

The power scheduling system here proposed is designed to manage the electric appliances of a group  of residential users consisting of a set $\cal H$ of houses (e.g., a smart city neighborhood). This system is used to schedule the energy plan of the whole group of users over a 24-hour time horizon based on a \textit{fully distributed} approach, with the final goal of improving the efficiency of the whole power grid by reducing the peak demand of electricity, while still complying with users' needs and preferences. More specifically, in our model we represent the daily time as a set $\cal T$ of time slots. Each householder \footnote{In this paper, we use the terms \textit{householder} and \textit{user} interchangeably.} $h\in{\cal H}$ has a set of non-preemptive electric appliances, ${\cal A}$, that must be executed during the day. In particular, the load profile of each appliance is modeled as an ordered sequence of phases, $\cal F$, in which a certain amount of power is consumed. We assume that the power consumption $l_{ahf}$ of a device $a\in{\cal A}$ belonging to user $h\in{\cal H}$ in each phase $f \in {\cal F}$ is an average of the real consumption of the device within the time slot duration (see Figure~\ref{lppower}, where 15-minute phases are used for a washing machine ~\cite{micene}).

Each device $a$ of user $h$ needs to run for $d_{ah}$ \textit{consecutive} slots within a total of ${\cal R}_{ah}$ slots delimited by a minimum starting time slot, $ST_{ah}$, and a maximum ending time slot, $ET_{ah}$ (verifying the constraint $ST_{ah} \leq ET_{ah} - d_{ah} + 1 $). These two parameters, $ST_{ah}$ and $ET_{ah}$, represent the users' preferences in starting each home device; they can be directly provided by users or automatically obtained through learning algorithms such as the one presented in \cite{2013_greencom}.

In our model, we consider two different kinds of devices: 

\begin{itemize}
\item \textit{Shiftable} appliances (e.g., washing machine, dishwasher): they are manageable devices that must be scheduled and executed during the day. 
In particular, for each shiftable device $a\in{\cal }A$ of the householder $h\in{\cal H}$, the minimum starting time and the maximum ending time verify the constraint $ST_{ah} < ET_{ah} - d_{ah} + 1$.
Hence, their scheduling is an optimization variable in our model.
%In particular, for each shiftable device $a\in{\cal }A_h$ of the householder $h\in{\cal H}$, the minimum starting time and the maximum ending time verify the constraint $ST_{ah} < ET_{ah} - d_{ah}$, hence their scheduling is a variable of the model.  

\item \textit{Fixed} appliances (e.g., light, TV) are non-manageable devices, for which the starting/ending times are fixed and cannot be optimized.
More specifically, for each fixed device $a\in{\cal A}$ of the householder $h\in{\cal H}$, the minimum starting time and the maximum ending time verify the constraint $ST_{ah} = ET_{ah} - d_{ah} + 1$.
\end{itemize}

\begin{figure}[t!]
\begin{center}
	\includegraphics[width=0.45\textwidth]{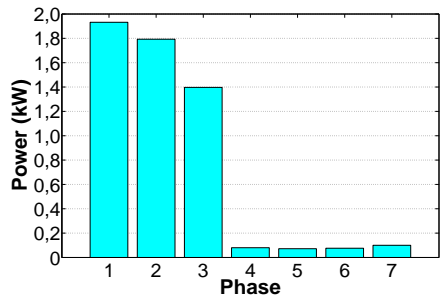}
	\caption{\small{Example of a load profile $l_{ahf}$ of a washing machine.}}
\label{lppower}
\end{center}
\end{figure}

Devices scheduling is represented by the binary variable $x_{aht}$, which is defined for each appliance $a \in {\cal A}$ of each householder $h\in{\cal H}$, and for each time slot $t \in \cal T$. It is equal to 1 if appliance~$a$ starts in time slot $t$, 0 otherwise.
%$x_{aht}$ associated with shiftable appliances are variables of the problem, while those associated with fixed appliances are only parameters and cannot be modified.
In order to use home appliances, householders can buy energy from the electricity retailer.
In particular, the power demand of user~$h$ at time $t$ is denoted by $y_{ht}$. The power demand of each user cannot exceed a supply limit defined by the retailer and denoted by $\pi_{SL}$; this limit represents the maximum power that can be used at any time. 

In our model, we have decided to use a dynamic pricing approach to define the electricity tariff since it represents a very promising method to improve the efficiency of the whole power grid \cite{Geelen2013151}. Since the higher the demand of electricity, the larger the capacity of grid generation and distribution to install, we suppose that the price of electricity at time $t$, $c_t(\cdot)$, is an increasing function of the total demand, $y_t$, of the group of users $\cal H$ at time $t$. Specifically, if the power demand is lower than a threshold $\pi_{TT}$, $c_t(\cdot)$ is a strictly increasing function of~$y_t$, otherwise it becomes a constant function of value $c_t(\pi_{TT})$. 

The objective of the power scheduling system is to reduce the daily bill of each user, by optimally scheduling house appliance activities and managing the power absorption from the grid.
Intuitively, based on the definition of the electricity price, by reducing the users' bills, the model is able to decrease the corresponding peak demand. 

Table~\ref{tab:notation} summarizes the notation used in the paper.
%
% ---------------------------------------------------------------------------------------
%
\begin{table}[ht!]
\centering
\caption{\small{Basic notation used in the paper.}}
\label{tab:notation}
\small
\begin{tabular}{|l|l|}
%% -----------------------------
%% Space between tables
%\multicolumn{2}{c}{} \\
%% ---------- Table 2 ----------
%\hline
%\multicolumn{2}{|c|}{\textbf{Parameters}} \\
\hline
	$x_{aht}$			&	Binary variable that indicates if appliance $a$\\
						&	of householder $h$ starts its execution at time $t$\\
\hline
$y_{ht}$   & Power demand of user~$h$ at time $t$ \\
\hline
	$y_{t}$ 				&	Total power demand at time $t$ \\
\hline
	$c_t(\cdot)$ 		& 	Pricing function\\
\hline
	$\pi_{TT}$ 			& 	Tariff Threshold of the pricing function\\
\hline
	$\pi_{SL}$	 		& 	Power Supply Limit\\
\hline
	$l_{ahf}$ 			& 	Consumption of device $a$ of user $h$ in phase $f$\\
\hline
	$d_{ah}$ 			& 	Operating time slots for device $a$ of user $h$\\
\hline
	$ST_{ah}/ET_{ah}$	& 	Starting/Ending time for device $a$ of user $h$\\
\hline
\end{tabular}
\end{table} 
%
% ---------------------------------------------------------------------------------------
%

%-------------------------------------------
% Distributed Power Scheduling Game
%-------------------------------------------
\section{Distributed Power Scheduling as a Non-cooperative Game}
\label{distributed_ps}

%\textbf{XXX WARNING we must always use the following terminology: n=(a,h)=user/player, h=consumer/house, a=appliance/device}
%Consider a system (e.g., a building) consisting of $\cal N$ users. Each user $i\in{\cal N}$ has a set ${\cal E}_i$ of $E_i$ devices (shiftable/fixed, these latter being a particular case of shiftable with no freedom of shifting). Each device $e_i\in{\cal E}_i$ needs to run for $T_{e_i}$ \textit{consecutive} slots within a total of $T$ slots (e.g., in a day) with a consuming power of $c_{e_i}$ Watts per slot. Let $T_{e_i}^{a}$ and $T_{e_i}^{a}+T_{e_i}$ denote the starting and ending time, given the constraint $T_{e_i}^{a}\in[T_{e_i}^{min}, T_{e_i}^{max}]$.
%
%We assume that the devices are synchronized at the slot level. Let $p(\cdot)$ denote the price (per power unit, i.e., per Watt) per slot as a function of $c$, the total consumption rate at that point of time.
%Define the total paid price $P$ as a function of $\mathbf{T}\triangleq \{\mathbf{T_i}\}_{i\in{\cal N}}$ where the strategy of user $i$ is $\mathbf{T_i}\triangleq \{T_{e_i}^{a}\}_{e_i\in{\cal E}_i}$:
%
%\begin{eqnarray}
%P(\mathbf{T})=\sum_{i\in{\cal N}} \sum_{e_i\in{\cal E}_i} \sum_{t=T_{e_i}^{a}}^{T_{e_i}^{a}+T_{e_i}} c_{e_i}p(c(t)).
%\end{eqnarray}

In this section, we model the distributed power scheduling problem, which constitutes the core of our proposed DSM system, using a non-cooperative game theoretical approach (formally described in Definition~\ref{def:game}), which naturally captures the interactions in such a distributed decision making process.
Our design rationale (Subsection \ref{singleAppliance}) is the following: each appliance $a\in{\cal A}$ is an \textit{autonomous decision maker} (or player) that must select the starting time of its execution (i.e., the $x_{aht}$ value); this permits to minimize the coordination required by a central server that would operate at each house to aggregate all appliances load and scheduling constraints.
Consequently, each appliance $a$ decides autonomously when to buy energy from the grid (i.e., $y_{ht}$) in order to minimize its contribution to the overall bill charged to house $h\in{\cal H}$, according to his user's\footnote{In this paper users are house owners, therefore we use interchangeably the terms \emph{house} and \emph{user}.} needs.

Then, after having solved the single-appliance game and studied its structural properties (Subsection \ref{teoremi}), in Subsection \ref{multiAppliance} we consider a natural (and more complex) extension where a player represents an entire household which \textit{jointly} decides the schedule of all his appliances.

%XXX DIREI VIA TUTTO Since the energy price is defined as a function of the total energy demand of the \textit{whole group} of users, this power scheduling problem cannot be solved with a centralized model because of the conflict between users' goals.
%For this reason, a distributed approach naturally arises, where all decisions are made locally and directly by end consumers. XXX
%Specifically, in our analysis we model the power scheduling problem as a no-cooperative game formally described in Definition~\ref{def:game}.
%Indeed, non-cooperative game theory which naturally models interactions in distributed decision making processes.

\subsection{Single-Appliance Game Formulation}
\label{singleAppliance}
%
%\textbf{XXX OPPURE Fully Distributed Power Scheduling Game XXX}
%
We first start describing the scenario where each appliance $a\in{\cal A}$ of house $h \in {\cal H}$ is modeled as an autonomous player in the power scheduling game $G$, which is defined as a triple $\{{\cal N}, {\cal I}, {\cal P}\}$: ${\cal N}={\cal A}\times{\cal H}$ is the player set, ${\cal I}\triangleq \{{\cal I}_n\}_{n\in {\cal N}}$ is the strategy set with ${\cal I}_n\triangleq \{x_{nt}\}_{n \in {\cal N}}$ being the strategy of player $n$, ${\cal P}\triangleq \{P_n\}_{n\in {\cal N}}$ is the cost function of player~$n$ with $P_n$ being the total price paid by $n$ for its electricity consumption (the total price due to appliance $a \in {\cal A}$ of house $h \in {\cal H}$).
%\textbf{XXX  with $P_n$ being the total contribution of appliance $a$ for its electricity consumption to the overall price paid by $h \in {\cal H}$. XXX}
Each appliance (player) $n$ chooses its strategy ${\cal I}_n$ to minimize its cost $P_n$.

The feasible power scheduling alternatives that form the strategy space ${\cal I}_n$ of each player $n = (a,h)$ (i.e., each appliance $a$ of householder $h$) must satisfy both the consumer needs and energy supply limits.
Specifically, the strategy space~${\cal I}_n$ must satisfy the following set of constraints:
\small
\begin{align}
	 {\cal I}_n = \bigg\lbrace
		& \overrightarrow{x}_{n} = \left[ x_{n1} ... x_{nt} ... x_{n \vert {\cal T}\vert} \right] \in \{0,1\}^{\vert {\cal T}\vert} : \nonumber\\
		& 	\sum_{t=ST_{n}}^{ET_{n} - d_{n} + 1} x_{nt} = 1 \label{const_1}\\
		& 	y_{nt} = \sum_{f \in {\cal F}: f \leq t} l_{nf} x_{n(t-f+1)} \qquad \qquad \forall t \in {\cal T}\label{const_3}\\
		& 	y_{ht} = \sum_{a \in {\cal A}} \sum_{f \in {\cal F}: f \leq t} l_{ahf} x_{ah(t-f+1)} \qquad \forall t \in {\cal T} \label{const_4}\\
		& 	y_{ht} \leq \pi_{SL} \qquad \qquad \qquad \forall  t \in {\cal T} \label{const_5}
		 \bigg\rbrace.
%\label{eq:strategy_space}
\end{align}
%
%	p_{nf} = l_{nf} x_{n(t-f+1)} \qquad \forall t \in {\cal T}, f \in {\cal F}: f \leq t \label{const_2}\\
%	y_{nt} = \sum_{f \in \cal F} p_{nf} \qquad \qquad \forall t \in \cal T \label{const_3}\\
%	y_{ht} = \sum_{a \in {\cal A}} \sum_{f \in \cal F} p_{ahf}  \qquad \forall t \in \cal T \label{const_4}\\
%
%  c_t = c_{MIN} + s (y_{ht} + p_t) \ \ \ \forall t \in {\cal T}:y_{ht} + p_t \geq \pi^{th} \label{const_6}\\
%  c_t = c_{MIN} + s \cdot \pi^{th} \qquad \forall t \in {\cal T}:y_{ht} + p_t < \pi^{th} \label{const_7}
%
%
\normalsize
Constraints~\bkref{const_1} guarantee that appliance $n$ starts in exactly one time slot and it is carried out in the interval $(ST_{n}, ET_{n})$.
%defined by householder $h$.
%The set of constraints~\bkref{const_2} forces the power required by each appliance in each time slot, $p_{nf}$, to be equal to the load profile $l_{nf}$  of the phase carried out in the considered time slot.
%the power demand $y_{ut}$ of the \textit{active householder} must meet the total power consumption of home appliances.
Constraints~\bkref{const_3} determine the daily consumption profile of the appliances in each time slot, which depends on their scheduling. More specifically, the power required by each appliance in each time slot $t$, $y_{nt}$, is equal to the load profile $l_{nf}$  of the phase carried out at time $t$. Note that a phase $f$ is running in $t$, only if the appliance started at time $t-f+1$, thus if $x_{n(t-f+1)} = 1$. In a similar fashion, equations~\bkref{const_4} define the daily power demand of house $h$ based on the appliances scheduling. Finally, constraints~\bkref{const_5} limit the overall power consumption of each house, since in every time slot $t \in \cal T$, the electricity bought from the grid cannot exceed the Supply Limit (SL) defined by the retailer and denoted by $\pi_{SL}$.
In such constraints, the power required by each appliance in each time slot $t$, $y_{nt}$, is equal to the load profile $l_{nf}$  of the phase executed starting from the time slot where $x_{nt} = 1$. Note that~\bkref{const_3} is used by the appliance $a$ to compute and minimize its contribution to the overall price charged to house $h$, whereas~\bkref{const_4} is used by householder $h$ to compute the bill.
%Finally, constraints~\bkref{const_6} and~\bkref{const_7} set the price paid for the electricity in each time slot $t \in \cal T$, $c_t$, as an increasing function of the total power demand of all houses.
%In such equations, $p_t$ is the total power demand of other householders of the group $\cal H$, $\pi^{th}$ is a tariff power threshold, $c_{MIN}$ is the minimum electricity price and $s$ is the slope of the cost function~\cite{micene}.
%We observe that if the total power demand is lower than a threshold $\pi^{th}$, the price is a strictly increasing function of the demand, otherwise it becomes a constant function of value $c_{MIN} + s \cdot \pi^{th}$.

Having defined the strategy space of each player, we can now define the \textit{single-appliance power scheduling game}.

\begin{definition}[Power Scheduling Game]
Mathematically, the power scheduling game is formalized as follows:
\small
\begin{align}
	G:\ & \min_{{\cal I}_n} \ P_n({\cal I}_n, {\cal I}_{-n}) = \sum_{t \in {\cal T}} y_{nt} \cdot c_{t}(y_{t}), \ \forall n\in{\cal N}. %\nonumber\\
\label{eq:app_game}
\end{align}
\label{def:game}
\end{definition}
\normalsize
The solution of the power scheduling game is characterized by a Nash Equilibrium (NE), a strategy profile ${\cal I}^*=({\cal I}_n^*, {\cal I}_{-n}^*)$ from which no player has an incentive to deviate unilaterally, i.e.,
$$P_n({\cal I}_n^*, {\cal I}_{-n}^*)\ge P_n({\cal I}_n, {\cal I}_{-n}^*), \quad \forall n\in{\cal N}, \forall {\cal I}_n \in {\cal I}.$$

To study the efficiency of the NE(s) of $G$, we define the \textit{social cost} of all players as the total price, $P$, paid by all customers to the electricity retailer, as a function of ${\cal I}=\{{\cal I}_n\}_{n\in {\cal N}}$, where the strategy of player $n$ is ${\cal I}_n=\{x_{nt}\}_{n \in {\cal N}}$:
\small
\begin{eqnarray}
\label{eq.SocialCost}
%	P(\mathbf{T})=\sum_{i\in{\cal N}} \sum_{e_i\in{\cal E}_i} \sum_{t=T_{e_i}^{a}}^{T_{e_i}^{a}+T_{e_i}} c_{e_i}p(c(t)).
	P({\cal I})=\sum_{h\in{\cal H}} \sum_{t \in \cal T} y_{ht} \cdot c_{t}(y_{t}),
\end{eqnarray}
where $y_{ht}$ is a function of $x_{nt}, n=(a,h) \in {\cal A}\times{\cal H}$ and $c_t$ is a function of $y_t$ that represents the total power demand of all players at time $t$.
%
%$y_t = \sum_{n \in \cal N} \sum_{f \in \cal F} p_{atf}$
%
\normalsize

By analyzing the utility functions of $G$, we can see that the pricing function $c_t(y_t)$ plays an important role on the resulting system equilibrium point(s). Specifically, our objective is to devise smart pricing policies to drive the system equilibrium to the optimum in terms of social cost. In this regard, we focus on a class of pricing functions, termed as \textit{regular pricing functions}, defined as follows.

\begin{definition}[Regular Pricing Function]
The pricing function $\{c_t(y_t)\}_{0\le t\le T}$ is a regular pricing function if the following properties hold:
\begin{itemize}
\item $c_t(y_t)$ is continuous, non-decreasing for $0\le t\le T$ and its derivative $c_t'(y_t)$ is continuous in $y_t$;
\item Given any two time intervals $[t_u^0,t_u^1]$, $[t_v^0,t_v^1]$ and power demand in these intervals $\{y_u\}_{t_u^0<u<t_u^1}$, $\{y_v\}_{t_v^0<v<t_v^1}$, if $\sum_{u=t_u^0}^{t_u^1} c_u'(y_u) > \sum_{v=t_v^0}^{t_v^1} c_v'(y_v)$, then it holds that $\sum_{u=t_u^0}^{t_u^1} [y_u c_u(y_u)]' > \sum_{v=t_v^0}^{t_v^1} [y_v c_v(y_v)]'$.
\end{itemize}
\end{definition}

\noindent\textbf{Remark}: Regular pricing functions characterize a family of utility functions widely applied in practical applications. A typical example of regular pricing function is the power function $c_t=\alpha y_t^{\beta}$ where $\alpha > 0$ and $\beta\ge 1$. The design motivation hinging behind such pricing functions is to encourage users to balance their electricity demand and consequently decrease the peak demand. 

In the following analysis, we show that under the condition that the pricing policy can be expressed by a regular function, the power scheduling game $G$ admits a number of desirable properties, particularly from the perspective of social cost.

\subsection{Solving the Power Scheduling Game}
\label{teoremi}

In this subsection, we solve the power scheduling game $G$ and study the structural properties of the game. We are specifically interested in large systems where the impact of an individual user on the system dynamics is limited.
Theorem~1 shows that $G$ is a generalized ordinal potential game, whose definition is reported hereafter for completeness.

\begin{definition}[Generalized Ordinal Potential Game]
Given a finite strategic game $\Gamma\triangleq \{{\cal N}, \{S_n\}_{n\in{\cal N}}, \{u_n\}_{n\in{\cal N}}\}$, $\Gamma$ is a generalized ordinal potential game if there exists a function (called potential function) $\Phi: S\rightarrow \mathbb{R}$ such that for every player $n\in{\cal N}$ and every $s_{-n}\in S_{-n}$ and $s_n,s_n'\in S_n$, it holds that
$$u_n(s_n,s_{-n})>u_n(s_n',s_{-n}) \Longrightarrow \Phi(s_n,s_{-n})>\Phi(s_n',s_{-n}).$$
\end{definition}

\begin{theorem}
Under the condition that $\{c_t(y_t)\}_{0\le t\le T}$ is a regular pricing function, the power scheduling game $G$ is a generalized ordinal potential game with the corresponding potential function being $P({\cal I})$
%\begin{itemize}
%\item $\{c_t(y_t)\}_{0\le t\le T}$ is a regular pricing function;
%\item Each user $h$ has only one appliance to configure; % Now we have as player n=(a,h) that is a single appliance of house h
%\item The power demand of each player $p_n\ll y_t$, where $y_t$ is the total power demand at time $t$.
%\end{itemize}
%The corresponding potential function is $P({\cal I})$.
\label{theorem:potential_game}
%
\begin{comment}
Specifically, if $c_t(y_t)$ is linear w.r.t. $y_t$, $P(\mathbf{I})$ is the exact potential function, i.e.,
%
\small
\begin{eqnarray}
P(\mathbf{I_n'}, \mathbf{I_{-n}})- P(\mathbf{I_n}, \mathbf{I_{-n}})=P_n(\mathbf{I_n'}, \mathbf{I_{-n}})- P_n(\mathbf{I_n}, \mathbf{I_{-n}}) \nonumber \\
 \forall n\in{\cal N}, \mathbf{I_n}'
\end{eqnarray}
\normalsize
where $P_n(\mathbf{I})$ is the price of customer $n$ paid to the operator under the strategy profile $\mathbf{I_n}$.
%
If $c_t(y_t)$ is strictly convex w.r.t. $y_t$, it holds that
%
\small
\begin{eqnarray}
	P(\mathbf{I_n'}, \mathbf{I_{-n}}) \ge
			P(\mathbf{I_n}, \mathbf{I_{-n}}) \Longleftrightarrow P_n(\mathbf{I_n'}, \mathbf{I_{-n}}) \ge P_n(\mathbf{I_n}, \mathbf{I_{-n}}) \nonumber \\
			\forall n\in{\cal N}, \mathbf{I_n}'
\end{eqnarray}
%
\end{comment}
\end{theorem}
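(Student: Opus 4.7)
The plan is to verify the defining implication of a generalized ordinal potential directly: for any player $n = (a,h)$ and any unilateral deviation ${\cal I}_n \to {\cal I}_n'$ with $P_n({\cal I}_n', {\cal I}_{-n}) < P_n({\cal I}_n, {\cal I}_{-n})$, I want to conclude that $P({\cal I}_n', {\cal I}_{-n}) < P({\cal I}_n, {\cal I}_{-n})$. Let $s, s'$ be the old and new start times of appliance $a$ of household $h$, and set $U = [s, s+d_{ah}-1]$ and $V = [s', s'+d_{ah}-1]$; only the demand on $U \cup V$ changes, and the contribution $y_{-n,t} \triangleq \sum_{m \neq n} y_{mt}$ of the other players is untouched, so $y_t' - y_t = y_{nt}' - y_{nt}$ for every $t$.

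I would then interpolate continuously between the two profiles by a parameter $\lambda \in [0,1]$, setting $y_{nt}(\lambda) = (1-\lambda)\, y_{nt} + \lambda\, y_{nt}'$ and $y_t(\lambda) = y_{-n,t} + y_{nt}(\lambda)$. Since only player $n$'s load varies along the path, $\dot y_t(\lambda) = \dot y_{nt}(\lambda) =: \Delta_t$ is independent of $\lambda$, supported on $U \cup V$, negative on $U \setminus V$ and positive on $V \setminus U$. A direct differentiation gives
\[ \dot P_n(\lambda) = \sum_t \Delta_t \bigl[c_t(y_t(\lambda)) + y_{nt}(\lambda)\, c_t'(y_t(\lambda))\bigr], \]
\[ \dot P(\lambda) = \sum_t \Delta_t \bigl[c_t(y_t(\lambda)) + y_t(\lambda)\, c_t'(y_t(\lambda))\bigr] = \sum_t \Delta_t\, \bigl[y\, c_t(y)\bigr]'\big|_{y = y_t(\lambda)}. \]
Since $\int_0^1 \dot P_n(\lambda)\, d\lambda = \Delta P_n < 0$, the continuity of $\dot P_n$ (guaranteed by the continuity of $c_t$ and $c_t'$ built into the regular-pricing definition) forces $\dot P_n(\lambda) < 0$ on a subset $\Lambda \subset [0,1]$ of positive measure.

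The crux is to propagate this sign from $\dot P_n(\lambda)$ to $\dot P(\lambda)$ on $\Lambda$. Splitting each sum according to the sign of $\Delta_t$, the inequality $\dot P_n(\lambda) < 0$ becomes a comparison between aggregates of the mixed quantity $c_t + y_{nt}\, c_t'$ over $V$ versus over $U$, weighted by the non-negative load coefficients $l_{n,\cdot}$. The non-decreasing property of $c_t$ takes care of the ``$c_t$'' part of the summand, while the defining implication of a \emph{regular} pricing function --- namely that an ordering of $\sum c_t'$ across two intervals is preserved when the summand is replaced by $[y\, c_t(y)]'$ --- handles the derivative part and effectively swaps the weight $y_{nt}$ for $y_t = y_{-n,t} + y_{nt}$. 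Combining these two pieces yields $\dot P(\lambda) < 0$ on $\Lambda$, and integrating over $[0,1]$ gives $\Delta P < 0$, establishing the ordinal-potential property.

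The main obstacle is exactly this last propagation step. The regular-pricing condition is phrased as an implication between pure aggregates of $c_t'$ and of $[y\, c_t(y)]'$, whereas $\dot P_n$ carries the mixed summand $c_t + y_{nt}\, c_t'$ weighted by the inhomogeneous loads $l_{n,\cdot}$. Reconciling this mismatch --- either via an algebraic decomposition that cleanly isolates the $c_t'$-only part, or via a mean-value-type argument that selects a representative $\lambda$ at which the regularity hypothesis applies pointwise to the two intervals $U$ and $V$ --- is where the bulk of the technical care lies; once that translation is made, integration of the differential implication delivers the theorem.
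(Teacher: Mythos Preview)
Your interpolation framework is natural, but the final step does not go through. From $\int_0^1 \dot P_n(\lambda)\,d\lambda<0$ you correctly extract a positive-measure set $\Lambda$ on which $\dot P_n(\lambda)<0$. Even granting the pointwise implication $\dot P_n(\lambda)<0\Rightarrow\dot P(\lambda)<0$ on $\Lambda$, this tells you nothing about $\dot P$ on $[0,1]\setminus\Lambda$, so you cannot conclude $\int_0^1\dot P(\lambda)\,d\lambda<0$. A simple counterexample: $\dot P_n\equiv -1$ on $[0,1/4]$ and $\dot P_n\equiv 1/4$ on $(1/4,1]$ gives $\Delta P_n<0$, yet $\dot P$ could equal $-1$ on $[0,1/4]$ and $+10$ on $(1/4,1]$, giving $\Delta P>0$. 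To rescue the argument along these lines you would need a two-sided sign correspondence \emph{and} a quantitative comparison of magnitudes, neither of which the regular-pricing hypothesis supplies.

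The ``translation'' obstacle you flag is also genuine and more serious than you indicate. The regularity hypothesis compares \emph{unweighted} sums of $c_t'$ over two intervals, whereas $\dot P_n(\lambda)=\sum_t\Delta_t\bigl[c_t(y_t(\lambda))+y_{nt}(\lambda)c_t'(y_t(\lambda))\bigr]$ carries the heterogeneous weights $\Delta_t$ (the time-varying load profile $l_{n,\cdot}$) and the mixed summand $c_t+y_{nt}c_t'$. There is no clean decomposition that isolates a pure $\sum c_t'$ comparison from $\dot P_n<0$ at a fixed $\lambda$. The paper sidesteps both difficulties by invoking the explicit large-system assumption $p_n\ll y_t$ (stated just before the theorem): it treats the appliance load as a small constant $p_n$, Taylor-expands the \emph{finite} difference $\Delta P_n$ to first order to obtain directly $\sum_u c_u'(y_u-p_n)>\sum_v c_v'(y_v)$, applies the regular-pricing implication once to pass to $\sum[y\,c(y)]'$, and then bounds the finite difference $\Delta P$ from below by that quantity via two further first-order estimates. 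No interpolation and no integration are involved; the small-player approximation is what makes the regularity hypothesis applicable in the form stated.
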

%%\normalsize

\begin{proof}
To prove the theorem, it suffices to show that for any two strategies ${\cal I}_n$ and ${\cal I}_n'$ and for any player $n\in{\cal N}$, it holds that
$$P_n({\cal I}_n,{\cal I}_{-n})>P_n({\cal I}_n',{\cal I}_{-n}) \Longrightarrow P({\cal I}_n,{\cal I}_{-n})>P({\cal I}_n',{\cal I}_{-n}).$$

In this regard, assume that $P_n({\cal I}_n,{\cal I}_{-n})>P_n({\cal I}_n',{\cal I}_{-n})$.
% ----------------------------------------------------------------------------------
% *** Old ***
%Assume that $n$ activates its appliance in time interval $t_u^0<u<t_u^1$ ($t_v^0<v<t_v^1$, respectively) in strategy ${\cal I}_n$ (${\cal I}_n'$). Let $y_t$ denote the total power demand at time $t$ under strategy profile $({\cal I}_h,{\cal I}_{-h})$. Between the strategy profiles $({\cal I}_h,{\cal I}_{-h})$ and $({\cal I}_h',{\cal I}_{-h})$, the difference is that $h$ migrates its power demand of $p_h$ from time interval $[t_u^0,t_u^1]$ to $[t_v^0,t_v^1]$.
% *** New ***
Assume that $n$ (i.e., appliance $a \in {\cal A}$ of house $h \in {\cal H}$) starts its activity in time interval $t_u^0<u<t_u^1$ ($t_v^0<v<t_v^1$, respectively) in strategy ${\cal I}_n$ (${\cal I}_n'$).
Let $y_t$ denote the total power demand at time $t$ under strategy profile $({\cal I}_n,{\cal I}_{-n})$.
Between the strategy profiles $({\cal I}_n,{\cal I}_{-n})$ and $({\cal I}_n',{\cal I}_{-n})$, the difference is that $n$ migrates its power demand of $p_n$ from time interval $[t_u^0,t_u^1]$ to $[t_v^0,t_v^1]$.
% ----------------------------------------------------------------------------------
Since we are focused on large systems where the impact of an individual user on the system dynamics is limited, i.e., $p_n\ll y_t$, it holds that
\small
\begin{align}
	& P_n({\cal I}_n,{\cal I}_{-n}) - P_n({\cal I}_n',{\cal I}_{-n}) =
	\sum_{u=t_u^0}^{t_u^1} p_n c_u(y_u) + \sum_{v=t_v^0}^{t_v^1} p_n c_v(y_v) \nonumber\\
	& - \left[ \sum_{u=t_u^0}^{t_u^1} p_n c_u(y_u - p_n) + \sum_{v=t_v^0}^{t_v^1} p_n c_v(y_v + p_n) \right] \simeq \nonumber\\
    & \simeq p_n \left[\sum_{u=t_u^0}^{t_u^1} c_u'(y_u - p_n) - \sum_{v=t_v^0}^{t_v^1} c_v'(y_v)\right] > 0,
\label{eq:aux1}
\end{align}
\normalsize
following the assumption that $P_n({\cal I}_n,{\cal I}_{-n})>P_n({\cal I}_n',{\cal I}_{-n})$.

Recalling the definition of regular pricing functions, it then holds that
\small
\begin{align}
	&\sum_{u=t_u^0}^{t_u^1} [(y_u - p_n) c_u(y_u - p_n)]' > \sum_{v=t_v^0}^{t_v^1} [y_v c_v(y_v)]'.
\label{eq:aux2}
\end{align}
\normalsize

On the other hand, we study the social cost under the strategy profiles $({\cal I}_n,{\cal I}_{-n})$ and $({\cal I}_n',{\cal I}_{-n})$.
Specifically, we can derive the difference between $P({\cal I}_n,{\cal I}_{-n})$ and $P({\cal I}_n',{\cal I}_{-n})$ as follows:
\small
\begin{align}
	& P({\cal I}_n,{\cal I}_{-n}) - P({\cal I}_n',{\cal I}_{-n}) = \sum_{u=t_u^0}^{t_u^1} [y_u c_u(y_u)] + \sum_{v=t_v^0}^{t_v^1} [y_v c_v(y_v)] \nonumber\\
	& - \left\{ \sum_{u=t_u^0}^{t_u^1} [(y_u - p_n) c_u(y_u - p_n)] + \sum_{v=t_v^0}^{t_v^1} [(y_v + p_n) c_v(y_v + p_n)] \right\} \nonumber\\
	& = \sum_{u=t_u^0}^{t_u^1} [y_u c_u(y_u)]-\sum_{u=t_u^0}^{t_u^1} [(y_u - p_n) c_u(y_u - p_n)] \nonumber\\
	& - \left\{ \sum_{v=t_v^0}^{t_v^1} [(y_v + p_n) c_v(y_v + p_n)] - \sum_{v=t_v^0}^{t_v^1} [y_v c_v(y_v)] \right\}.
\label{eq:aux3}
\end{align}
\normalsize

With some algebraic operations, we have
\small
\begin{align}
\label{equazione13}
 	& \sum_{u=t_u^0}^{t_u^1} [y_u c_u(y_u)]-\sum_{u=t_u^0}^{t_u^1} [(y_u - p_n) c_u(y_u - p_n)] = \nonumber\\
 	& \sum_{u=t_u^0}^{t_u^1} \{y_u[c_u(y_u) - c_u(y_u - p_n)] + p_n c_u(y_u - p_n)\} \simeq \nonumber\\
	& \simeq \sum_{u=t_u^0}^{t_u^1} \{y_u p_n c_u'(y_u - p_n) + p_n c_u(y_u - p_n)\} > \nonumber\\
	& > \sum_{u=t_u^0}^{t_u^1} [(y_u - p_n) c_u (y_u - p_n)]'.	
\end{align}
\normalsize

Similarly, we have
\small
\begin{align}
\label{equazione14}
	& \sum_{v=t_v^0}^{t_v^1} [(y_v + p_n) c_v (y_v + p_n)] - \sum_{v=t_v^0}^{t_v^1} [y_v c_v(y_v)] < \sum_{v=t_v^0}^{t_v^1} [y_v c_v (y_v)]'.
\end{align}
\normalsize

Hence, it follows from~\eqref{equazione13} and~\eqref{equazione14} that
\small
\begin{align}
	& P({\cal I}_n,{\cal I}_{-n}) - P({\cal I}_n',{\cal I}_{-n}) = \nonumber\\
	& \sum_{u=t_u^0}^{t_u^1} [y_u c_u(y_u)] - \sum_{u=t_u^0}^{t_u^1} [(y_u - p_n) c_u(y_u - p_n)] \nonumber\\
    & - \left\{ \sum_{v=t_v^0}^{t_v^1} [(y_v + p_n) c_v(y_v + p_n)] - \sum_{v=t_v^0}^{t_v^1} [y_v c_v(y_v)] \right\} > \nonumber\\
    & > \sum_{u=t_u^0}^{t_u^1} [(y_u - p_n) c_u (y_u - p_n)]' - \sum_{v=t_v^0}^{t_v^1} [y_v c_v(y_v)]'> 0.
\label{eq:aux4}
\end{align}
\normalsize

The proof is thus completed.
\end{proof}

\begin{corollary}[Efficiency of the Equilibrium]
Under the conditions of Theorem~\ref{theorem:potential_game}, the equilibrium of $G$ minimizes the total price paid to the operator, i.e., the total social cost.
\end{corollary}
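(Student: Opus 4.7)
The plan is to exploit the crucial coincidence highlighted by Theorem~\ref{theorem:potential_game}: the potential function of $G$ is \emph{identical} to the social cost $P({\cal I})$ defined in equation~\eqref{eq.SocialCost}. Once this is recognized, the corollary follows from a standard argument about global minimizers of ordinal potential functions.

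Concretely, I would first observe that the strategy space ${\cal I}$ is finite: each player $n=(a,h)$ chooses a starting slot from a bounded window $[ST_n,ET_n-d_n+1]$, so the joint strategy space is a finite product of finite sets. Hence the potential function $P({\cal I})$, being a real-valued function on a finite domain, attains its global minimum at some profile ${\cal I}^\star=({\cal I}_n^\star,{\cal I}_{-n}^\star)$.

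Next I would argue that ${\cal I}^\star$ is a Nash equilibrium of $G$. Suppose, for contradiction, that some player $n$ has a profitable unilateral deviation ${\cal I}_n'$, i.e.\ $P_n({\cal I}_n^\star,{\cal I}_{-n}^\star) > P_n({\cal I}_n',{\cal I}_{-n}^\star)$. By the generalized ordinal potential property established in Theorem~\ref{theorem:potential_game}, this implies $P({\cal I}_n^\star,{\cal I}_{-n}^\star) > P({\cal I}_n',{\cal I}_{-n}^\star)$, contradicting the global minimality of ${\cal I}^\star$. Therefore no such deviation exists and ${\cal I}^\star$ is a pure Nash equilibrium. Since the social cost defined in~\eqref{eq.SocialCost} is precisely $P({\cal I})$, the equilibrium ${\cal I}^\star$ achieves the minimum total price paid to the operator, proving the claim.

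The only subtle point — and what I would flag as the main obstacle — is that in a generalized ordinal potential game \emph{not every} Nash equilibrium is a global minimizer of the potential; only one is guaranteed to be. So the statement of the corollary should be read as \emph{``there exists an equilibrium of $G$ that minimizes the social cost''}, which is exactly what the Finite Improvement Property announced in the introduction delivers when best-response dynamics are initialized at, or converge to, the global minimizer of $P$. If a stronger reading were intended (every NE is socially optimal), one would need an additional uniqueness-style argument, e.g.\ strict convexity of the $c_t(\cdot)$ together with an aggregation property; but under the stated regularity conditions alone, the existence statement above is the natural and correct conclusion.
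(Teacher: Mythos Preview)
Your argument is correct and is essentially the same as the paper's: the corollary is stated in the paper without a separate proof, relying implicitly on the fact (from Theorem~\ref{theorem:potential_game}) that the potential function coincides with the social cost $P({\cal I})$, so a minimizer of $P$ is a Nash equilibrium. Your write-up simply spells out this standard potential-game reasoning explicitly, and your caveat that in a \emph{generalized} ordinal potential game only the existence of a socially optimal equilibrium is guaranteed (not that every equilibrium is optimal) is a valid and useful clarification that the paper glosses over.
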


\begin{corollary}[Convergence to the Equilibrium]
Under the conditions of Theorem~\ref{theorem:potential_game}, $G$ admits the Finite Improvement Property (FIP).
Any sequence of asynchronous improvement steps is finite and converges to a pure equilibrium. Particularly, the sequence of best response updates converges to a pure equilibrium.
\end{corollary}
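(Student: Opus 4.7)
The plan is to deduce the Finite Improvement Property as a direct corollary of Theorem~\ref{theorem:potential_game}, leveraging the finiteness of the joint strategy space. First, I would observe that each player $n=(a,h)$ selects a starting slot from the discrete window $[ST_n,ET_n-d_n+1]$, so ${\cal I}_n$ is a finite set; consequently the product strategy space is finite and the potential $P({\cal I})$ defined in~\eqref{eq.SocialCost} takes only finitely many distinct values.

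Next, I would pick an arbitrary sequence of asynchronous improvement steps $({\cal I}^{(0)},{\cal I}^{(1)},\ldots)$, in which at each step $k$ exactly one player $n_k$ switches unilaterally from ${\cal I}_{n_k}^{(k)}$ to ${\cal I}_{n_k}^{(k+1)}$ while strictly lowering its own cost, i.e.\ $P_{n_k}({\cal I}_{n_k}^{(k)},{\cal I}_{-n_k}^{(k)}) > P_{n_k}({\cal I}_{n_k}^{(k+1)},{\cal I}_{-n_k}^{(k)})$. Since Theorem~\ref{theorem:potential_game} identifies $P({\cal I})$ as a generalized ordinal potential, this strict decrease in the individual cost implies a strict decrease $P({\cal I}^{(k)}) > P({\cal I}^{(k+1)})$ in the potential. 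Hence the sequence of potential values is strictly monotone decreasing inside a finite set, so the improvement path must terminate after finitely many steps, which establishes the FIP. Its last profile, admitting no profitable unilateral deviation, is by definition a pure Nash equilibrium. Best response dynamics is then just the special case where each updating player jumps directly to its cost-minimizing strategy, which constitutes an improvement step whenever the current profile is not already a Nash equilibrium, so the same argument yields convergence.

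I do not anticipate any serious obstacle: the argument reduces to the textbook fact that every finite generalized ordinal potential game has the FIP, and all of the nontrivial work is already absorbed into Theorem~\ref{theorem:potential_game}. The only point worth making explicit in the write-up is the finiteness of ${\cal I}_n$, since this hypothesis is essential but appears only implicitly in the game formulation through the discrete time-slot structure and the bounded scheduling window $[ST_n,ET_n]$.
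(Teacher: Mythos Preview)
Your argument is correct and matches the paper's own treatment: the paper states this corollary without proof, simply noting afterward that potential games enjoy the existence of pure Nash equilibria and convergence of best response dynamics, which is exactly the textbook fact you spell out. Your explicit remark that the finiteness of each ${\cal I}_n$ follows from the discrete, bounded scheduling window is a useful addition, since the paper leaves this implicit.
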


Potential games have nice properties, such as existence of at least one pure Nash equilibrium, namely the strategy that minimizes $P(\cal I)$.
Furthermore, in such games, best response dynamics always converges to a Nash equilibrium.

Hereafter, we describe a simple implementation of best response dynamics, which allows each player $n$, namely each appliance $a$ of each householder $h$, to improve its cost function in the proposed power scheduling game.
Such algorithm is the best response strategy for a player $n$ minimizing objective function~\bkref{eq:app_game}, $\sum_{t \in \cal T} y_{nt} \cdot c_{t}(y_{t})$, assuming other appliances are not changing their strategies.

Specifically, each appliance, in an iterative fashion, defines its optimal power scheduling strategy based on electricity tariffs (calculated according to other players' strategies) and broadcasts its energy plan (i.e., its daily power demand profile) to the group ${\cal N}$.
At every iteration, energy prices are updated according to the last strategy profile and, as a consequence, other appliances can decide to modify their consumption scheduling by changing their strategy according to the new tariffs.
The iterative process is repeated until convergence is reached.
Once convergence is reached, appliances power scheduling and energy prices are fixed as well as the energy bill charged to each householder $h$, which is simply the sum of all his appliances prices $\sum_{t \in {\cal T}} y_{ht} \cdot c_{t}(y_{t})$.

The best response mechanism is executed by solving, in an iterative way, an optimization model.
Specifically, at every iteration and based on the energy demands of other appliances, this model is used to optimally decide the power plan of the appliance in charge of defining its energy demand at this step of the iterative process, with the goal of minimizing the electricity bill.
We will show in the Numerical Results section that our proposed algorithm converges, in few iterations, to a Nash equilibrium.

Note that the best response dynamics here proposed is only used to identify and study the efficiency of the Nash Equilibrium of the game.
While the transmission of the power profile to other users may raise security and privacy concerns, we observe that each appliance needs only the aggregated power profile of other appliances for the real implementation of the Single-Appliance DSM.
Therefore, we can envisage a system in which appliances communicate only with the operator that broadcasts the aggregated information after collecting all appliance's schedules.
Furthermore, to completely remove the communication of any sensitive information, any learning algorithm could be designed to allow appliances to converge to the equilibrium.
% ----------------------------------------------------------------------------------

\subsection{Multiple-Appliance Game Formulation}
\label{multiAppliance}
%
%\textbf{XXX OPPURE Hierarchical Distributed Power Scheduling Game XXX}
%
The natural extension of the single-application power scheduling game considers as a player the householder $h$ who chooses the schedule of \textit{all} his appliances according to his preferences.
The strategy space for player $h$ is therefore composed of all variables $x_{aht}$ corresponding to the activities of all his appliances.

\begin{definition}[Multiple-Appliance Power Scheduling Game]
Mathematically, the multiple-appliances power scheduling game is formalized as follows:
\small
\begin{align}
	G:\ & \min_{{\cal I}_h} \ P_h({\cal I}_h, {\cal I}_{-h}) = \sum_{t \in {\cal T}} y_{ht} \cdot c_{t}(y_{t}), \ \forall n\in{\cal N} \\
	 {\cal I}_h = \bigg\lbrace
		 & X_{h} =
		\begin{pmatrix}
		x_{1h1} & x_{1ht} & \cdots & x_{1h \vert {\cal T}\vert} \\
		x_{2h1} & x_{2ht} & \cdots & x_{2h \vert {\cal T}\vert} \\
		\vdots  & \vdots  & \ddots & \vdots  \\
		x_{\vert {\cal A}\vert h1} & x_{\vert {\cal A}\vert ht} & \cdots & x_{\vert {\cal A}\vert h \vert {\cal T}\vert}
		\end{pmatrix} \in \{0,1\}^{\vert {\cal A}\vert \times \vert {\cal T}\vert} : \nonumber\\
		& 	\sum_{t=ST_{ah}}^{ET_{ah} - d_{ah} + 1} x_{aht} = 1 \qquad \forall a \in {\cal A} \label{const_6}\\
		& 	y_{ht} = \sum_{a \in {\cal A}} \sum_{f \in {\cal F}: f \leq t} l_{ahf} x_{ah(t-f+1)} \qquad \forall t \in {\cal T} \label{const_7}\\
		& 	y_{ht} \leq \pi_{SL} \qquad \qquad \qquad \forall  t \in {\cal T} \label{const_8}
		 \bigg\rbrace.
\end{align}
\label{def:ma_game}
\end{definition}
\normalsize
Similarly to~\bkref{const_1},~\bkref{const_4} and~\bkref{const_5}, constraints~\bkref{const_6},~\bkref{const_7} and~\bkref{const_8} are used, respectively, to guarantee that each appliance $a$ starts in exactly one time slot within the interval $(ST_{ah}, ET_{ah})$, to define the daily power demand of house $h$ and to upper-bound the demand of each house according to the supply limit $\pi_{SL}$.

We underline that scheduling optimally multiple appliances increases the complexity of the Smart Grid architecture, since each house requires a central server that collects the energy consumption information from all house appliances and the householder's preferences (i.e., starting/ending times).
Conversely, in the single-appliance formulation each appliance operates independently, and the householder can configure asynchronously the different appliances preferences.
Furthermore, as we will show in the next Section, the higher complexity of the multiple-appliance scheduling game does not result in lower costs for the householder or a lower power peak for the retailer's grid.

%\input{distributed_ps}

%-------------------------------------------
% Distributed Power Scheduling Game
%-------------------------------------------
%\input{distributed_ps-1}

%-------------------------------------------
% Distributed Power Scheduling Model
%-------------------------------------------
%\input{optimal_ps}

%-------------------------------------------
% Numerical Results
%-------------------------------------------
\section{Numerical Results}
\label{Numerical_Results}

This section presents the numerical results we obtained evaluating the Single-Appliance DSM (SA-DSM), and the Multiple-Appliance DSM (MA-DSM) mechanisms in realistic Smart Grid scenarios using real traces.
First, we describe the considered scenarios and parameters used in our numerical evaluation.
Then, we compare and discuss the performance achieved by the two proposed mechanisms. 

\subsection{Simulated Scenarios}
\label{caseStudyDescription}
We considered a set $\cal T$ of 24 time slots of 1 hour each.
Residential houses are equipped with 4 shiftable devices out of 11 realistically-modeled appliances\footnote{Namely, \textit{shiftable} devices: washing machine, dishwasher, boiler, vacuum cleaner; \textit{fixed} devices: refrigerator, purifier, lights, microwave oven, oven, TV, iron.}.
Moreover, each house is connected to the grid with a peak power limit of 3~kW ($ \pi_{SL} = 3~\mathrm{kW}$).
The basic domestic configuration and load profiles of each appliance have been defined based on data collected from 100 houses served by an Italian energy supply operator.
%relevant to the Italian standard~\cite{micene}.

Starting from the basic house configuration, we defined multiple scenarios by varying the number of users participating in the game and the parameters of both the energy price function and the scheduling constraints.
Specifically, for the number of houses we considered 3 different cases where the game is played, respectively, by $5$, $20$ and $50$ householders, to assess the performance of the proposed system when the competition level increases. 
Concerning the electricity tariffs, we consider the following pricing function to compute the price paid for the electricity in each time slot $t \in \cal T$:
\small
\begin{align}
	& c_{t}(y_{t}) = 
		\begin{cases}
		c_{MIN} + s \cdot y_{t} \qquad & \forall t \in {\cal T}:y_{t} < \pi_{TT} \\
		c_{MIN} + s \cdot \pi_{TT} \qquad & \forall t \in {\cal T}:y_{t} \geq \pi_{TT}.
		\end{cases}
\label{eq:price_func}
\end{align}
\normalsize
%
%Finally, constraints~\bkref{const_6} and~\bkref{const_7} set the price paid for the electricity in each time slot $t \in \cal T$, $c_t$, as an increasing function of the total power demand of all houses.
In such equations, $y_t$ is the total power demand, $\pi_{TT}$ is a tariff power threshold, $c_{MIN}$ is the minimum electricity price and $s$ is the slope of the cost function. %~\cite{micene}.
%We observe that if the total power demand is lower than a threshold $\pi^{th}$, the price is a strictly increasing function of the demand, otherwise it becomes a constant function of value $c_{MIN} + s \cdot \pi^{th}$.
Specifically, we fixed the minimum electricity price $c_{MIN}~=~50~\times~10^{-6}$~\euro, and we varied the slope of the cost function by defining it as an integer-multiple of the minimum slope $s_{MIN}~=\dfrac{~0,11~\times~10^{-6}}{\vert{\cal H}\vert}$\euro/kWh, $\vert{\cal H}\vert$ being the number of householders.
As for the value of the tariff threshold, $\pi_{TT}$, after which the energy price is no longer dependent on users' demand, we considered 5 different cases: $25\%$, $30\%$, $35\%$, $40\%$ and $100\%$ of the maximum peak power limit of the whole group of users (i.e., $\vert{\cal H}\vert \cdot \pi_{SL}$). 
By varying the cost function parameters, we assess the impact of the energy tariff on the system performance.

Finally, we also defined different scenarios considering various levels of appliances flexibility.
As reported in Section~\ref{system_model}, for each appliance a bound has been introduced for both the starting and ending time (i.e., $ST_{ah}$ and $ET_{ah}$), representing the period in which the appliance activity has to be executed (note that the activity duration $d_{ah}$ is fixed and lower than the window $ET_{ah}-ST_{ah}$).
Therefore, the larger the execution window is, the higher the system flexibility is in scheduling devices.
In order to evaluate the effect of the scheduling flexibility on the system performance, we defined 3 scenarios where residential users have (1) no flexibility (the \textit{``fix''} label in the following curves), (2) tight flexibility (\textit{``short''}) and (3) loose temporal constraints (\textit{``long''}) on the execution of shiftable devices. 
Moreover, for each of these flexibility levels, we considered two different cases to define $ST_{ah}$ and $ET_{ah}$ for each house: in the first, the parameters of all houses devices are identical (\textit{homogeneous} case), while in the second, we varied them to consider a group of \textit{heterogeneous} users.  

% 4. Performance metrics
In order to gauge the performance of the proposed mechanisms, we measured the following performance metrics:
\begin{itemize}
  \item \textit{Social Cost}: $P({\cal I})$, defined as in eq. (\ref{eq.SocialCost}). Note that this value represents the electricity bill of the group of houses.
  \item \textit{Fairness}: we considered the \textit{Jain's Fairness Index (JFI)} defined as in~\cite{jain_book}.
%, defined according to Equation~\bkref{jain_fair_index_eq}:
%		\begin{equation}
%			\begin{small}
%			\label{jain_fair_index_eq}
%			\begin{aligned}
%				& \mbox{Jain's Fairness Index} = \frac{( \sum_{h \in \mathcal{H}} {u_{h}} )^2}{\vert \mathcal{H} \vert \cdot \sum_{h \in \mathcal{H}} {u_{i}}^2}
%			\end{aligned}
%			\end{small}
%		\end{equation}
%	\noindent
% 	The \textit{Jain's Fairness Index} measures the spread of the price paid by users, and it varies from $\nicefrac{1}{\vert \mathcal{H} \vert}$ (no fairness) to $1$ (perfect fairness).
 	\item \textit{Peak demand}: defined as the peak of the power demand of the whole group of users: $ \max_{t } \sum_{h \in \mathcal{H}} y_{ht}$. 	
% 	\begin{equation}
%		\max_{t } \sum_{h \in \mathcal{H}} y_{ht}
%	\end{equation}
	%\item \textit{Solving time}: defined as the amount of time required to find the Nash Equilibrium of the power scheduling game through the best response mechanism.
\end{itemize}
%
%\textbf{The proposed best response mechanisms have been implemented in MATLAB and solved on a Intel Core i5 3.33~GHz, with a 4~GB RAM.}

% ================================================================================
\subsection{SA-DSM versus MA-DSM}
\label{sec:sa_vs_ma_dsm}
%
% --------------------------------------------------------------------------------
\begin{figure*}[t]
	\centering
	% --------------------------------------------------------------------------------
	\subfloat[SA-DSM Social Cost]
	{
		\includegraphics[width=0.45\textwidth]{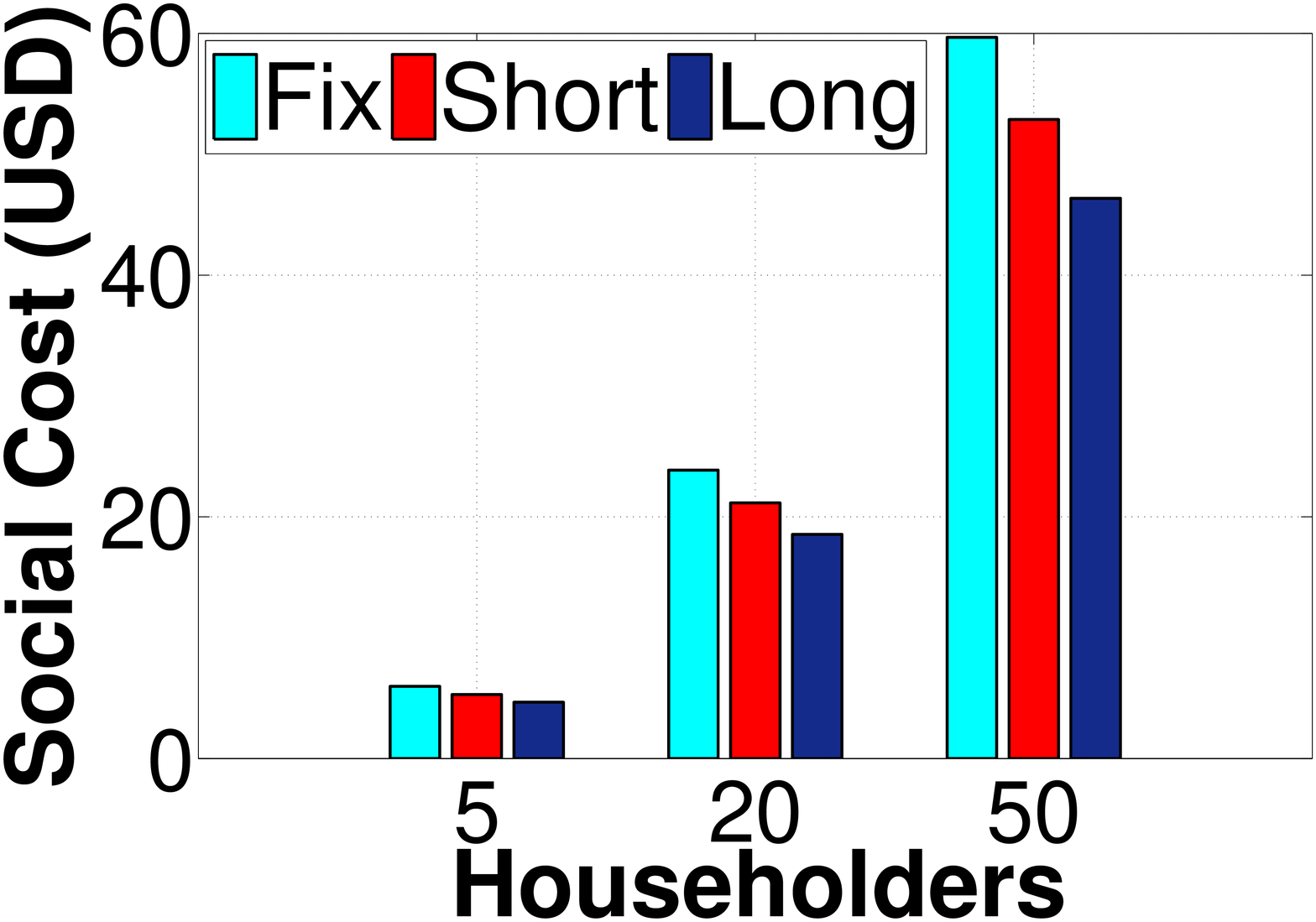}
		\label{sa_sw}
	}
	\subfloat[MA-DSM Social Cost]
	{
		\includegraphics[width=0.45\textwidth]{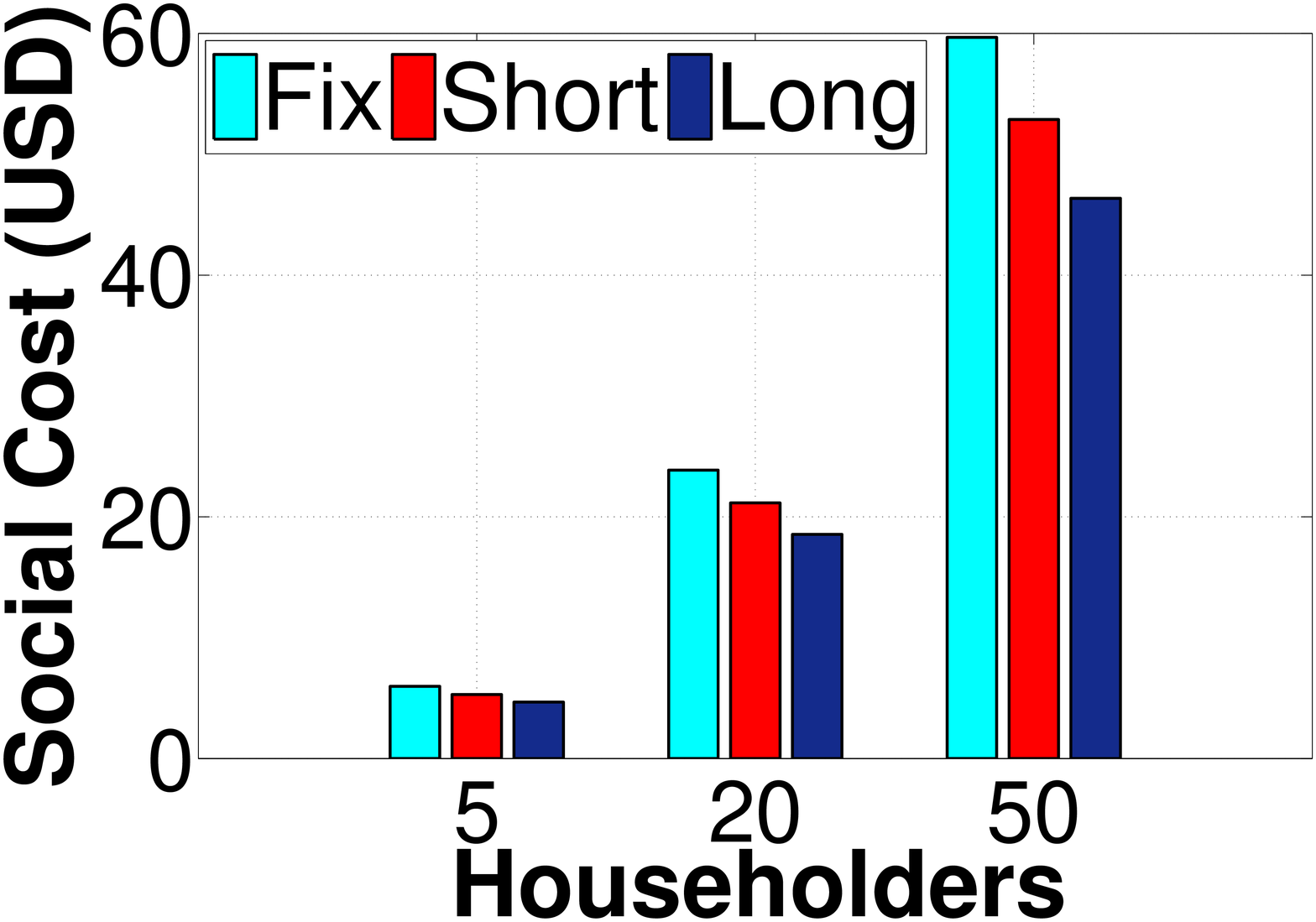}
		\label{ma_sw}
	}\\
	\subfloat[SA-DSM Peak Demand]
	{
		\includegraphics[width=0.45\textwidth]{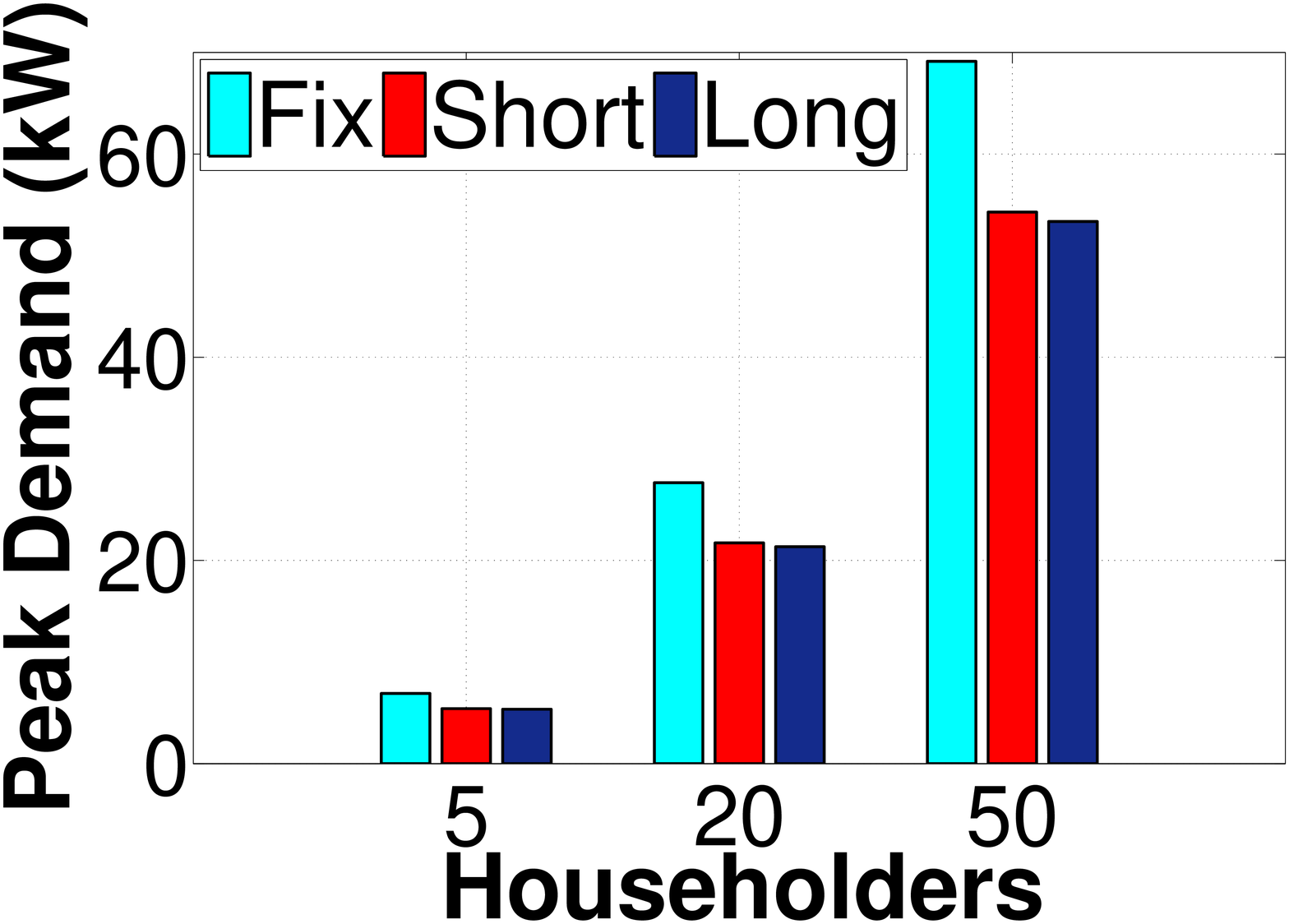}
		\label{sa_peak}
	}
	\subfloat[MA-DSM Peak Demand]
	{
		\includegraphics[width=0.45\textwidth]{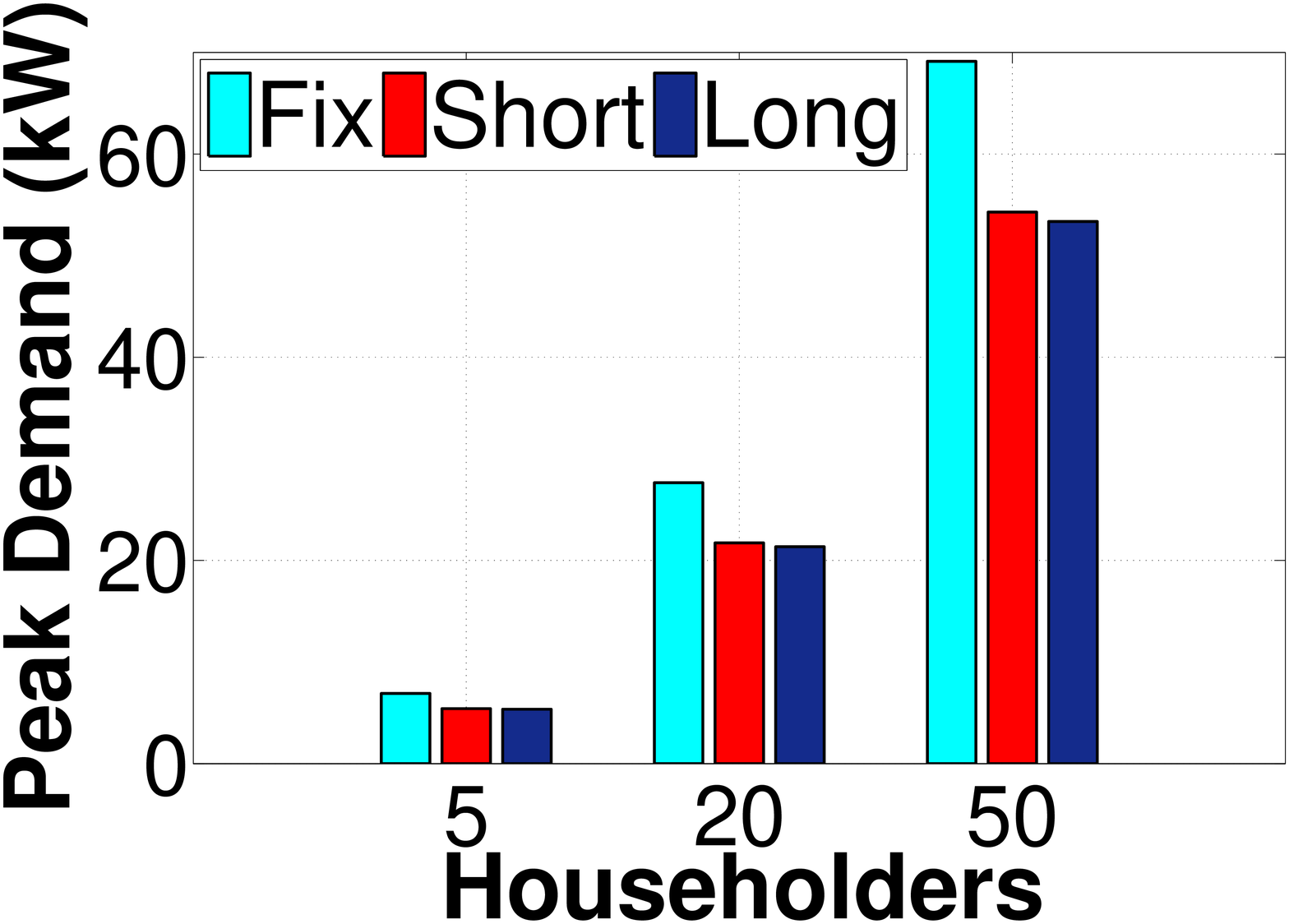}
		\label{ma_peak}
	}
	\caption{\small{SA-DSM versus MA-DSM system results considering homogeneous houses and a linear increasing cost function with minimum slope.}}
	\label{fig:sa_vs_ma}
\end{figure*}
% --------------------------------------------------------------------------------
%
Figure~\ref{fig:sa_vs_ma} illustrates the social cost and the peak demand obtained using the two proposed mechanisms as a function of the number of houses.
In such scenario, householders have homogeneous preferences (i.e., $ST_{ah}$ and $ET_{ah}$ vary only among appliances, but all houses' preferences are identical).
%The slope of the electricity tariff is set to its minimum value $s^{MIN}~=\dfrac{~0,11~\times~10^{-6}}{\vert{\cal H}\vert}$\euro/kWh and the cost function threshold is $\pi^{th}= \vert{\cal H}\vert \cdot \pi_{SL}$, so that the energy tariff is a pure linear increasing function of the total power demand of the group of players (the power demand of users cannot be greater than $\vert{\cal H}\vert \cdot \pi_{SL}$, due to the peak power limit constraints).

% 1. Analysis SA vs. MA 
It can be observed that both mechanisms exhibit very similar trends in terms of social cost and peak demand.
Indeed, in all the considered scenarios, the gap between the overall householder's electricity bill obtained using the SA-DSM and the MA-DSM is always lower than $3\%$.
% Old
%Specifically, the percentage difference between the users' bill (or the peak demand) observed with the appliance-level game and the corresponding user-level one has always been lower than $3\%$ in all the instances that we considered, even when varying the testing scenarios (i.e. number of players participating in the game, cost function threshold, appliances parameters).
%However, even if these two games results are very close, our tests have proved that the Nash Equilibrium of the appliance-level game is not an equilibrium of the user-level game too. 

The only remarkable difference that we observed between these two solutions is related to the solving time of the corresponding best response dynamics.
Specifically, the SA-DSM mechanism converges more quickly to the Nash Equilibrium than MA-DSM due to the smaller solution space explored by the best response algorithm.
Specifically, in the scenario with 50 houses and long flexibility preferences, the SA-DSM mechanism takes only 8 seconds, in average, to find the equilibrium, whereas the MA-DSM approach needs around 15 minutes\footnote{On an Intel Core i5 3.33~GHz, with a 4~GB RAM.}.
For this reason, the SA-DSM system can be considered an excellent solution for scheduling the appliances execution, since it achieves practically the same results of the MA-DSM system in terms of electricity bills and peak demand, but in a remarkably lower time and with a fully distributed approach.
As a consequence, devices that individually take scheduling decisions represent an effective and efficient solution for realistic Smart Grids deployments: only minimal computation and communication capacity is required among all system's components, without any centralized house controller.

% 2. General analysis (independent of SA/MA)
%
It can be further observed from Figures~\ref{fig:sa_vs_ma}\subref{sa_sw} and~\ref{fig:sa_vs_ma}\subref{ma_sw} that, independently of the DSM mechanism, users always benefit from higher scheduling flexibility.
Indeed, larger execution intervals for shiftable appliances (i.e., the curves identified by ``Long'' in the figures) always allow users to pay cheaper bills than those obtained with short and fixed flexibility levels (i.e., curves identified by ``Short'' and ``Fix'', respectively), since the DSM system can explore a larger solution space.
However, the cheaper bills obtained using the long flexibility preferences come at the cost of longer solving time (i.e., the amount of time required to find the Nash Equilibrium through the best response algorithm).
Indeed, we observed that the solving time of the long flexibility scenario doubles with respect to the short flexibility case.
Numerical results presented in Figures~\ref{fig:sa_vs_ma}\subref{sa_sw} and~\ref{fig:sa_vs_ma}\subref{ma_sw} also show that the number of players marginally affects the gain that is achieved with the proposed DSM systems.
In particular, the electricity bill saving obtained with respect to the \textit{no-flexibility} scenario is around 11\% and 22\% for, respectively, the \textit{short-flexibility} and the \textit{long-flexibility scenarios}, irrespective of the number of players and the DSM mechanism.
Indeed, while a larger set of players increases the competition, the proposed DSM mechanisms achieve the same gains by efficiently exploiting the flexibility of shiftable appliances.
% Old
%even if by increasing the dimensions of the group of users the game competition grows, it is still possible to obtain almost the same bill reduction by means of efficiently exploiting the flexibility of shiftable appliances.
%However, it must be noted that a greater number of players lead to a longer solving time even if it only increases in an almost linear fashion with the number of users. 

One of the main advantages for the operator to adopt the proposed SA-DSM system, as illustrated in Figure~\ref{fig:20_power_profile}, is that it automatically ensures the reduction of the electricity demand during peak hours (i.e., high-price hours) without any centralized coordination among users.
Specifically, the peak demand decreases by as much as 22\% using the SA-DSM system with respect to the value obtained considering fixed scheduling choices (i.e., the \textit{no-flexibility} scenario), and the gain is slightly influenced by the appliances flexibility.
The reduction of the peak power demand results from shifting loads from peak hours to other time-slots.
To this end, only few users' scheduling changes are required (i.e., only appliances used at peak hours have to be shifted) and even a short flexibility can achieve remarkable results.
%In Figure~\ref{ig:20_power_profile}, the overall electricity demand resulting from the proposed system, in the case of a 20 house group with a short flexibility, is compared to that of an unmanaged group of residential users (i.e. no-flexibility scenario). As one can see, by efficiently shifting loads, the demand during peak hours is significantly lower, a desirable property from the Distribution System Operator (DSO) perspective.

% --------------------------------------------------------------------------------
\begin{figure}[ht]
	\centering
	\includegraphics[width=0.5\textwidth]{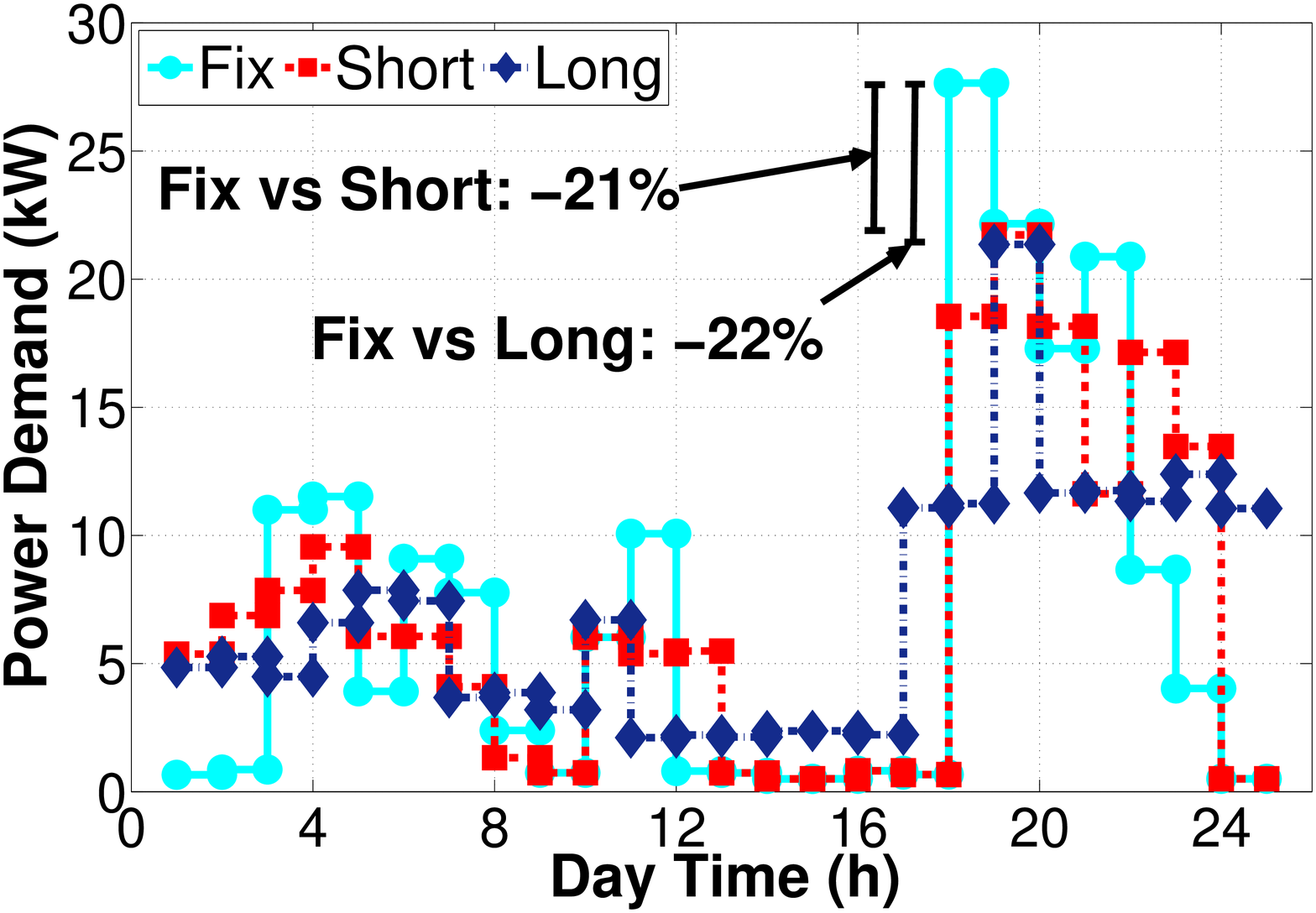}
	\caption{\small{Peak reduction guaranteed by SA-DSM: aggregate power demand of 20 identical houses (80 homogeneous appliances).}}
	\label{fig:20_power_profile}
\end{figure}
% --------------------------------------------------------------------------------

% ================================================================================
\subsection{Analysis of Householder Preferences}
\label{homogeneous_vs_hetherogeneous}
%
% --------------------------------------------------------------------------------
\begin{figure*}[t]
	\centering
	% --------------------------------------------------------------------------------
	\subfloat[Social Cost]
	{
		\includegraphics[width=0.32\textwidth]{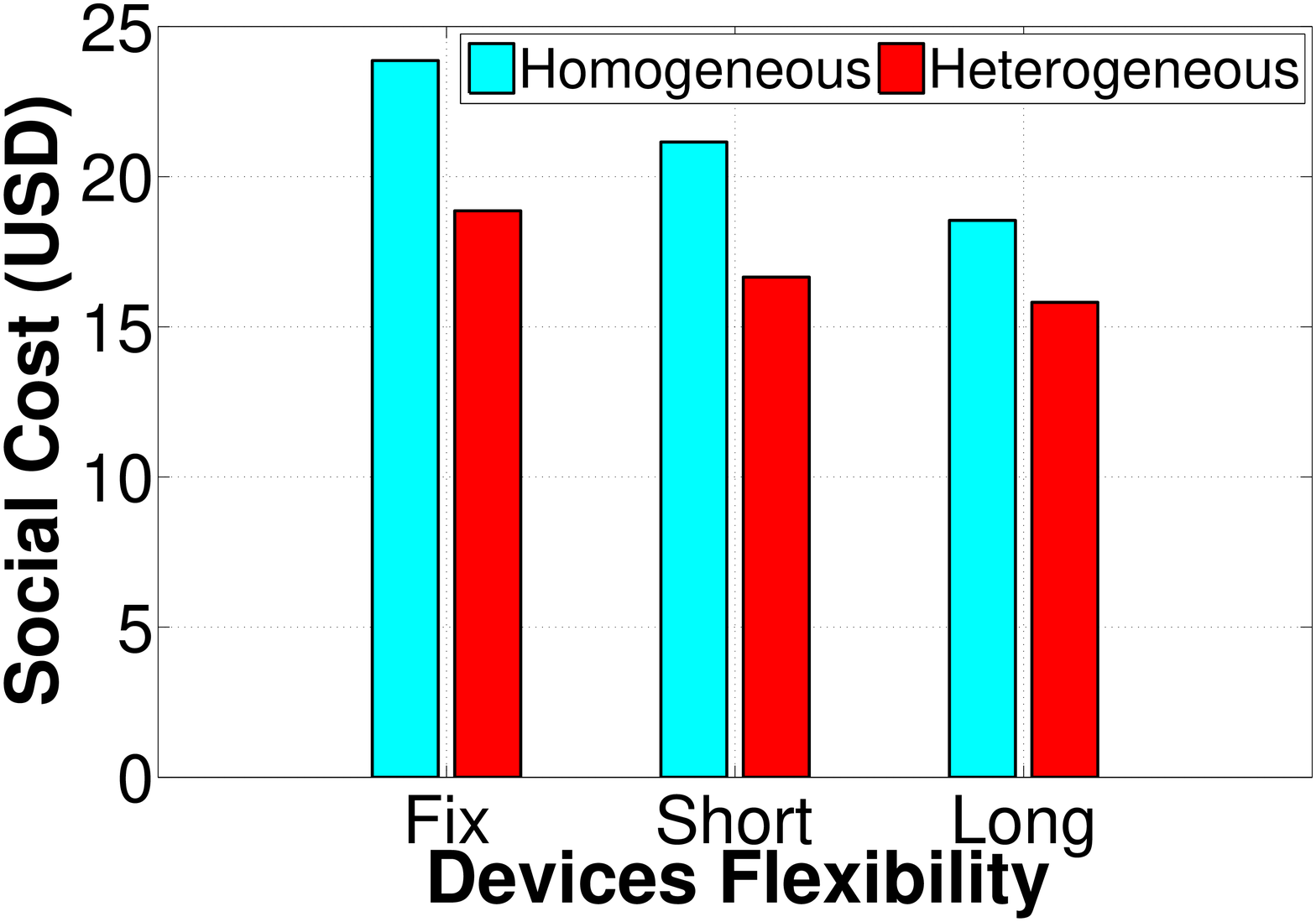}
		\label{hh_sa_sw}
	}
	\subfloat[Peak Demand]
	{
		\includegraphics[width=0.32\textwidth]{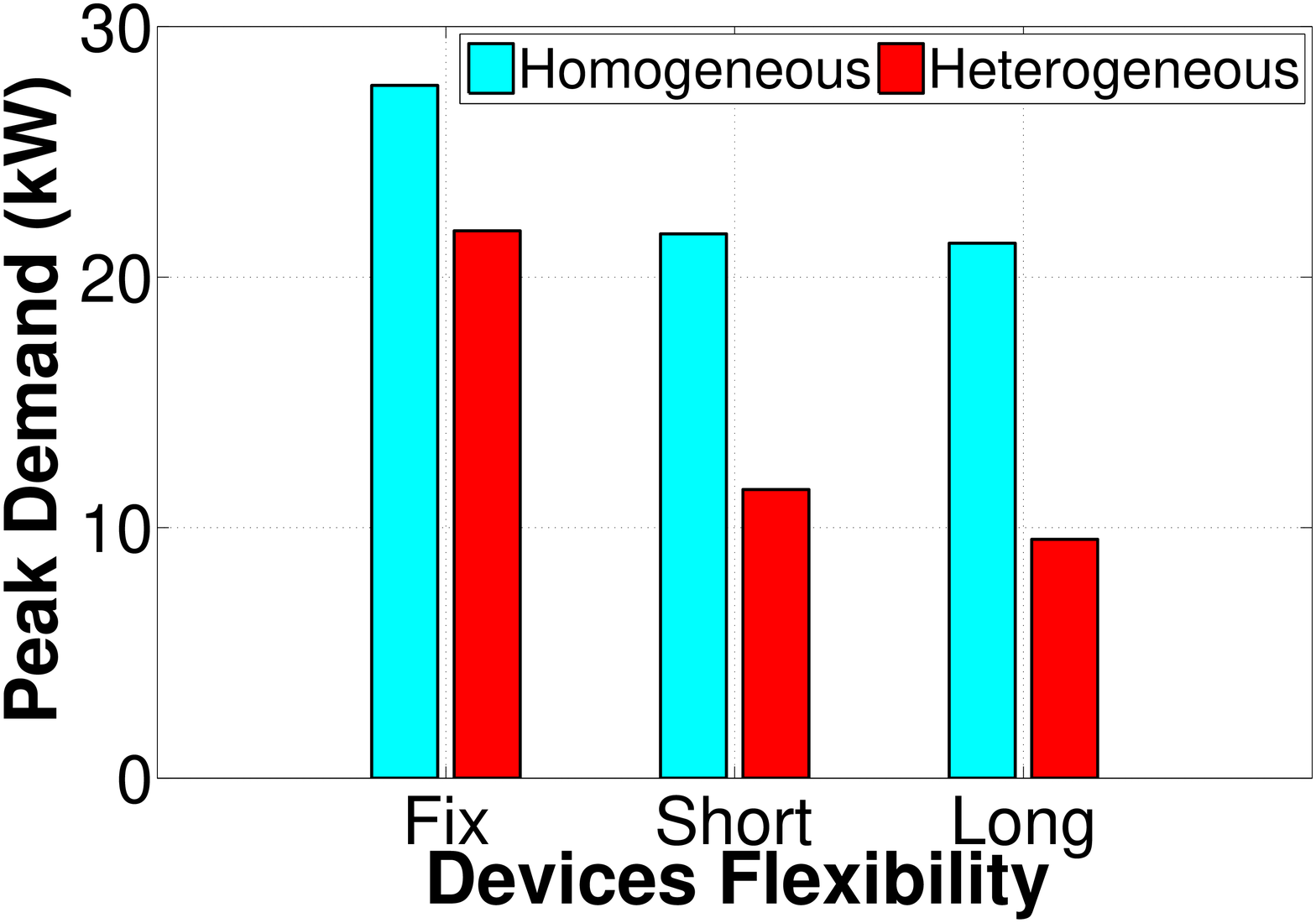}
		\label{hh_sa_peak}
	}
	\subfloat[Power Profile]
	{
		\includegraphics[width=0.32\textwidth]{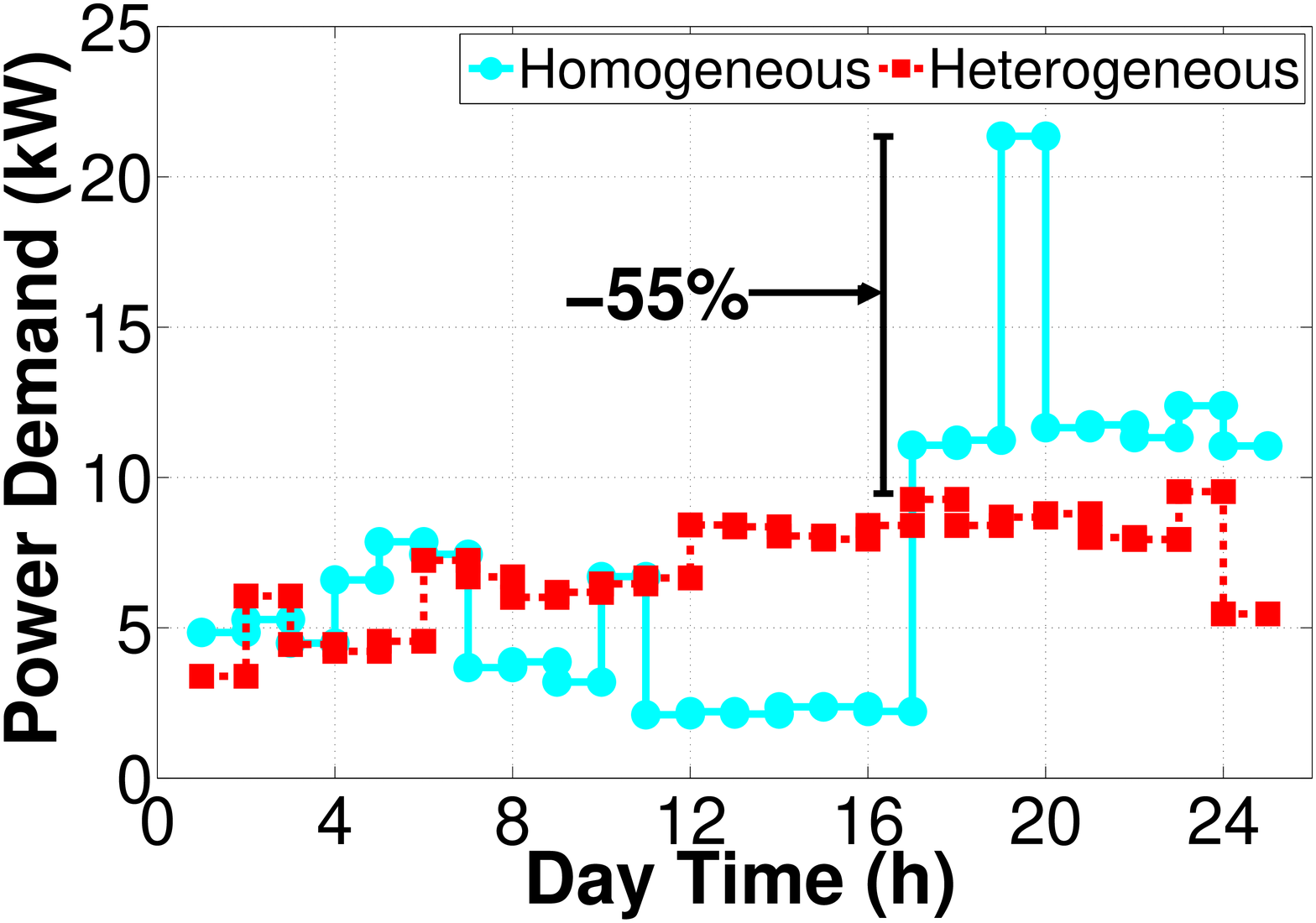}
		\label{hh_sa_power_profile}
	}
	\caption{\small{SA-DSM results with 20 homogeneous and heterogeneous houses preferences.}}
	\label{fig:hom_vs_het}
\end{figure*}
% --------------------------------------------------------------------------------
%
%Numerical results similar to those presented in Figures~\ref{hom_vs_het} have also been obtained when considering a group of heterogeneous users.
Figures~\ref{fig:hom_vs_het}\subref{hh_sa_sw},~\ref{fig:hom_vs_het}\subref{hh_sa_peak} and~\ref{fig:hom_vs_het}\subref{hh_sa_power_profile} illustrate, respectively, the social cost, the peak demand and the aggregated power profile of the proposed SA-DSM mechanism as a function of the appliances flexibility.
Specifically, these figures compare the results obtained with 20 \textit{homogeneous} and \textit{heterogeneous} houses.
%, when the slope of the electricity tariff is set to its minimum value $s^{MIN}~=\dfrac{~0,11~\times~10^{-6}}{|20|}$\euro/kWhcost and the power threshold of the cost function is $\pi^{th}= |20| \cdot \pi_{SL}$.
%As we can see from Figures~\ref{fig:hom_vs_het}, the same general observations made for the homogeneous houses case (i.e. Figures~\ref{users_fig}) can be applied to the heterogeneous case too. However, some differences can be found between these two scenarios.

As illustrated in Figure~\ref{fig:hom_vs_het}\subref{hh_sa_sw}, the electricity bill is cheaper when considering heterogeneous players.
Indeed, the power demand of heterogeneous houses can be more smoothly distributed over the day than in the homogeneous scenario, due to the different householders preferences about the time windows in which devices can operate.
As a consequence, since the energy price in every time slot is defined as a function of the power demand of houses appliances in that particular slot, players can benefit from loads spreading over time.
Figure~\ref{fig:hom_vs_het}\subref{hh_sa_peak} shows that also the peak demand can be considerably reduced when considering heterogeneous houses.
Specifically, in this case, the proposed SA-DSM mechanism reduces the peak of the power demand down to 55\% in the long flexibility case with respect to the corresponding homogeneous scenario because of a smoother load distribution.
This effect appears clearly in Figure~\ref{fig:hom_vs_het}\subref{hh_sa_power_profile}, where the overall electricity demand over the 24 hours of 20 heterogeneous houses with loose scheduling preferences (long flexibility) is compared to that of 20 identical residential houses.

% ================================================================================
\subsection{Analysis of Energy Tariffs}
\label{energy_tariffs}
To evaluate how energy tariffs affect the performance of the proposed DSM systems, we fix the slope of the electricity pricing function $s~=\dfrac{~0,11~\times~10^{-6}}{\vert 20 \vert}$\euro/kWh and we consider four different tariff thresholds (i.e., the threshold on the aggregated demand above which the electricity price becomes constant): $\pi_{TT} \in \{15, 18, 21, 24\}~\mathrm{kW}$. 
Figures~\ref{cusers_fig}\subref{bill_cusers} and ~\ref{cusers_fig}\subref{peak_cusers} show the social cost and peak demand of a group of 20 identical houses as a function of the devices flexibility considering the four aforementioned energy tariffs.
As expected, in all cases, the flexibility on the scheduling preferences reduces both the electricity bill and the peak demand.
However, by playing with the energy tariff, the operator can further increase users' gain on the electricity price and, at the same time, decrease the peak power absorbed from the grid, thus resulting in lower investments and operating costs.
For example, as illustrated in Figure~\ref{cusers_fig}\subref{bill_cusers}, the social cost decreases down to 11\% from the no-flexibility to the long flexibility scheduling scenarios when the operator fixes the tariff threshold $\pi_{TT}=15~\mathrm{kW}$.
However, this gain increases up to 22\% with $\pi_{TT}=24~\mathrm{kW}$.
Indeed, when $\pi_{TT}= 15~\mathrm{kW}$, cost savings can be obtained only by shifting loads from peak hours to time slots in which the total power demand is lower than $15~\mathrm{kW}$.
In contrast, a wider set of scheduling alternatives is available to reduce the social cost when $\pi_{TT}= 24~\mathrm{kW}$, since power loads can be shifted from peak hours to all time slots where the aggregated power demand is lower than $24~\mathrm{kW}$.
As a consequence, as the tariff threshold increases, the number of devices shifted outside the peak hours grows, reducing the peak demand as illustrated in Figure~\ref{cusers_fig}\subref{peak_cusers}.

\begin{figure}[t]
	\centering
	% --------------------------------------------------------------------------------
	\subfloat[Social Cost]
	{
		\includegraphics[width=0.45\textwidth]{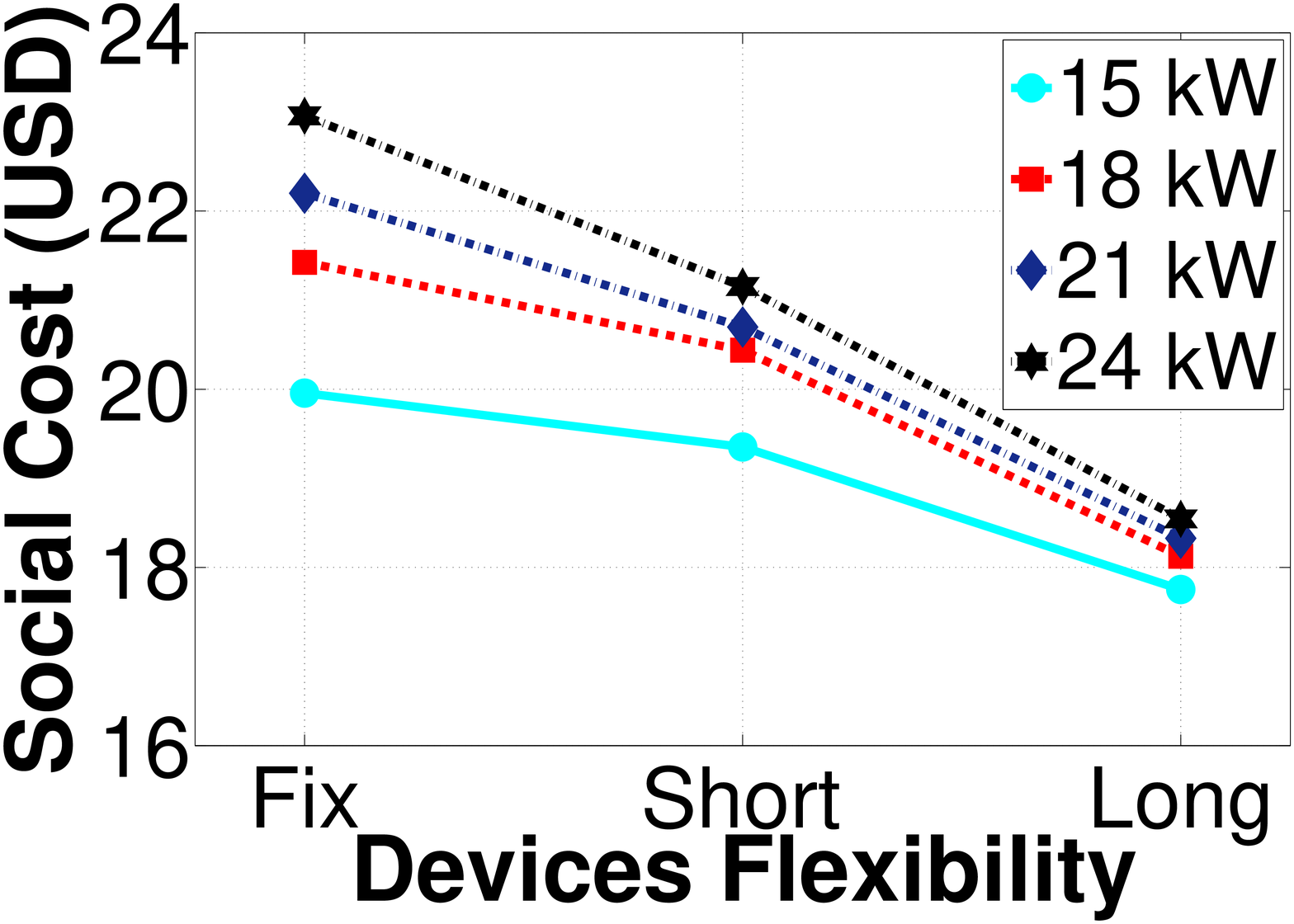}
		\label{bill_cusers}
	}
	\subfloat[Peak Demand]
	{
		\includegraphics[width=0.45\textwidth]{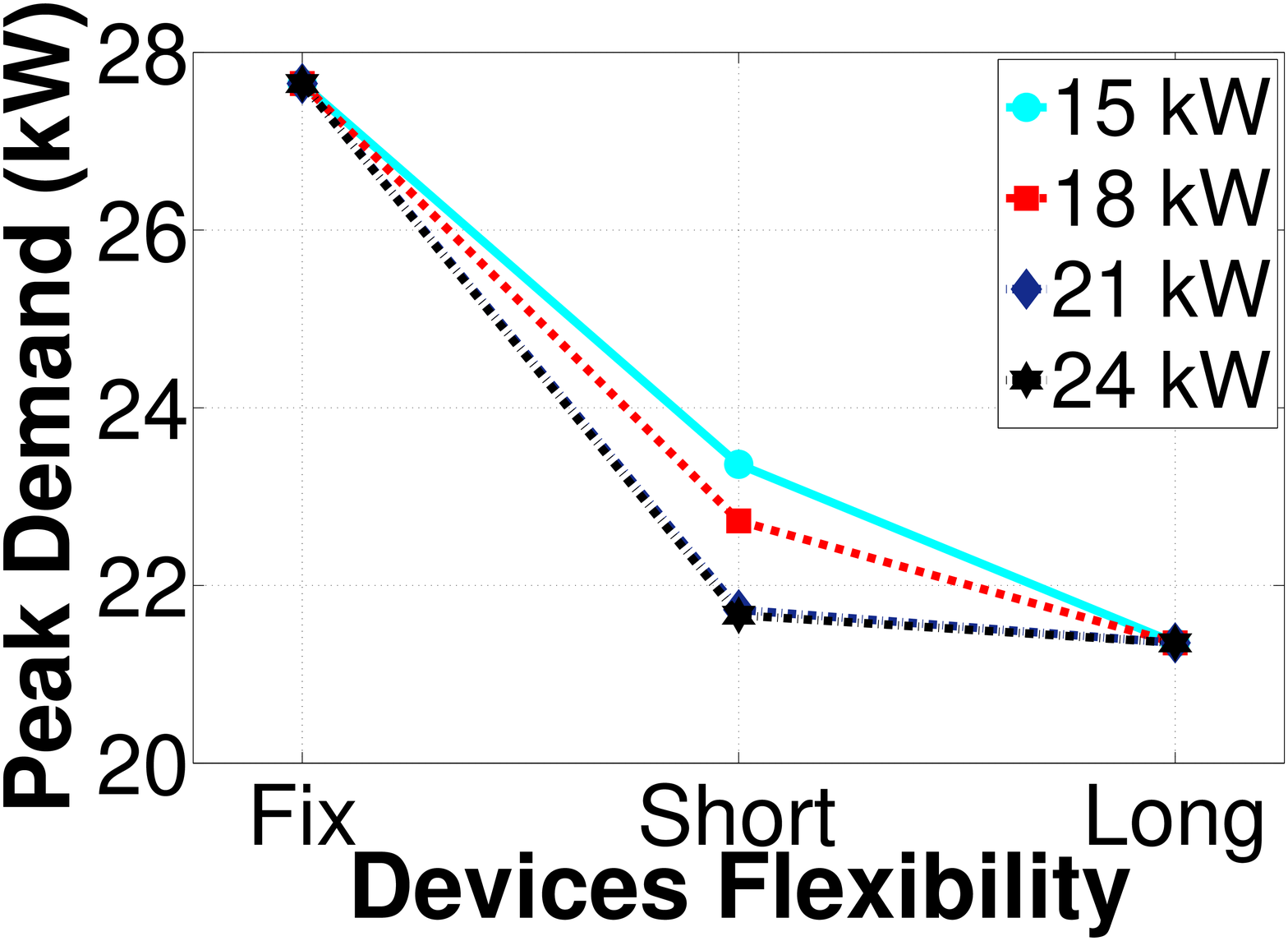}
		\label{peak_cusers}
	}
	\caption{\small{SA-DSM results with 20 identical houses and a varying tariff threshold ($\pi_{TT}$) of the pricing function.}}
	\label{cusers_fig}
\end{figure}
% --------------------------------------------------------------------------------
%
In our tests, we also varied the slope $s$ of the energy tariff to assess its impact on the system performance.
However, in the numerical results, which we do not show for the sake of brevity, we observed no significant variation neither on the social cost nor on the peak demand.
Finally, we underline that in all the considered scenarios, we observed that all players pay actually an equal share of the electricity bill, since the Jain's Fairness Index is always very close to 1.
Indeed, even in the scenarios with heterogeneous residential users, the JFI is always higher than $0.9991$.

%-------------------------------------------
% Conclusion
%-------------------------------------------
\section{Conclusions}
\label{conclusion}

In this paper,
%we tackled a fundamental problem that has recently emerged in electric systems, namely the \textit{peak absorption} in the power demand, which arises due to the high correlation among energy demands of residential customers.
%To solve this issue, 
we proposed a novel, fully distributed Demand-Side Management (DSM) system aimed at reducing the peak demand of a group of residential users. 

We modeled our system using a game theoretical approach, where players are the customer's appliances, which decide \textit{autonomously} when to execute. 
We demonstrated that the proposed game is a generalized ordinal potential one, %and that any sequence of asynchronous improvement steps is finite and converges to a pure Nash equilibrium. For this reason,
and we proposed a best response dynamics mechanism which is guaranteed to converge in few steps to efficient Nash equilibrium solutions.
Furthermore, we showed that our approach performs extremely close to a more complex setting where each customer must optimize the schedule of all his appliances, since it provides practically the same results in terms of minimizing their daily electricity bill.
For this reason, due to its intrinsic simplicity, robustness and distributed architecture, we recommend the adoption of our proposed approach.

% the set of strategies is their daily energy demand, and the objective function the customers aim at minimizing is their daily electricity bill.
%
Numerical results, obtained using realistic load profiles and appliance models, demonstrate that the proposed DSM system represents a promising and very effective solution to reduce the peak absorption of the entire system and the electricity bill of individual customers in a fully distributed way. 

%residential users can minimize their electricity bill in a completely distributed fashion, letting each appliance decide autonomously when to execute, and reduce the peak absorption of the entire system.

\section*{Acknowledgment}
\noindent
This work was partially supported by the French ANR in the framework of the Green-Dyspan project.

\addcontentsline{toc}{section}{\textit{References}}
\bibliographystyle{unsrt}
\bibliography{bibliography}

\clearpage

\end{document}